\newcommand*{\tv}{{\mathrm{TV}}}
\newcommand*{\Mc}{{\mathcal{M}}}
\newcommand*{\Sc}{{\mathcal{S}}}
\newcommand*{\Prob}{{\mathbb{P}}}
\let\O\relax
\newcommand*{\O}{{\mathcal{O}}}
\newcommand*{\tmix}{t_{\mathrm{mix}}}
\title{Isotropy and Log-Concave Polynomials: Accelerated Sampling and High-Precision Counting of Matroid Bases}
\author{Nima Anari}
\affil{\small Stanford University, \textsf{anari@cs.stanford.edu}}
\author{Micha{\l} Derezi\'{n}ski}
\affil{\small University of California, Berkeley, \textsf{mderezin@berkeley.edu}}
\begin{document}
	\maketitle
	
	\begin{abstract}
	We define a notion of isotropy for discrete set distributions. If $\mu$ is a distribution over subsets $S$ of a ground set $[n]$, we say that $\mu$ is in isotropic position if $\PrX{S\sim \mu}{e\in S}$ is the same for all $e\in [n]$. We design a new approximate sampling algorithm that leverages isotropy for the class of distributions $\mu$ that have a log-concave generating polynomial; this class includes determinantal point processes, strongly Rayleigh distributions, and uniform distributions over matroid bases. We show that when $\mu$ is in approximately isotropic position, the running time of our algorithm depends polynomially on the size of the set $S$, and only logarithmically on $n$. When $n$ is much larger than the size of $S$, this is significantly faster than prior algorithms, and can even be sublinear in $n$. We then show how to transform a non-isotropic $\mu$ into an equivalent approximately isotropic form with a polynomial-time preprocessing step, accelerating subsequent sampling times. The main new ingredient enabling our algorithms is a class of negative dependence inequalities that may be of independent interest.
	
	 As an application of our results, we show how to approximately count bases of a matroid of rank $k$ over a ground set of $n$ elements to within a factor of $1+\epsilon$ in time $ O((n+1/\epsilon^2)\cdot \poly(k, \log n))$. This is the first algorithm that runs in nearly linear time for fixed rank $k$, and achieves an inverse polynomially low approximation error.
	\end{abstract}
	
	\section{Introduction}
\label{sec:intro}


Designing efficient algorithms for sampling from continuous distributions of convex type, i.e., those with a log-concave density, has a long history \cite[see][ for relevant surveys]{Vem05, Vem10, LV18}. A challenging part of many algorithms for this problem has been transforming the distribution via a linear map to an equivalent standard form called the  \emph{isotropic position}. To date, the running times for the fastest algorithms for sampling from convex polytopes and log-concave distributions are dominated by finding the correct scaling linear transform \cite{LV06, MV19}. As our main contribution, we introduce an analogous notion of isotropic position for discrete distributions, and design sampling algorithms that can take advantage of the isotropic position for distributions with a log-concave generating polynomial; a class that should in many ways be thought of as the discrete analog of log-concave distributions.

We study the problem of sampling from a distribution on size $k$ subsets of $[n]=\set{1,\dots,n}$ given by a density function $\mu:\binom{[n]}{k}\to \R_{\geq 0}$.\footnote{We note that the choice of $\binom{[n]}{k}$ as the domain is simply a standard form and trivial transformations can be applied to many high-dimensional discrete distributions to obtain this form \cite[see, e.g.,][]{ALO20}.} The task of approximate sampling is to query the function $\mu$ repeatedly in order to produce a random set $S\in \binom{[n]}{k}$, such that approximately $\Pr{S}\propto \mu(S)$. While there is no hope for efficient algorithms that run in nontrivial time ($\ll n^k$) for general $\mu$, recent works have identified a tractable class of distributions with many nice structural properties \cite{AOV18, ALOV19, BH19, CGM19, ALOV20}, namely, the class of distributions with a log-concave generating polynomial. This class consists of all $\mu$ where the following polynomial is log-concave as a function over $\R_{\geq 0}^n$:
\[ g_\mu(z_1,\dots,z_n):=\sum_{S\in \binom{[n]}{k}} \mu(S) \prod_{i\in S} z_i. \]
For any fixed $n$ and $k$, this class has nonempty interior in the space of all distributions on $\binom{[n]}{k}$. But more importantly, it includes well-studied distributions such as the uniform distribution over bases and/or independent sets of a matroid \cite{ALOV19, BH19}, strongly Rayleigh measures and specifically determinantal point processes \cite{BBL09}. For more examples refer to \cite{ALOV19, BH19}. Approximate sampling and counting in matroids has connections to many natural combinatorial problems, which motivates the need for finding fast algorithms for these problems. For example, estimating network reliability can be cast as counting independent sets in a cographic matroid \cite{CC97} and estimating reliability of a linear code against erasures can be cast as counting independent sets in a linear matroid \cite{Cam98}. For an exposition on similar reliability quantities of rigidity matroids see \cite{Gra91}.

\Textcite{ALOV19} used natural random walks studied in the context of high-dimensional expanders \cite{KM16, DK17, KO20} to show that distributions $\mu$ with a log-concave generating polynomial can be approximately sampled in polynomial time. Specifically, they showed that the following ``down-up'' random walk can be used to sample from $\mu$:
\begin{itemize}
	\item Starting from a set $S_0\in \binom{[n]}{k}$, repeat for $i=0, \dots, t-1$:
	\begin{itemize}
		\item Sample $e\in S_i$ uniformly at random and let $T_i=S_i-\set{e}$.
		\item From all $f\in [n]-T_i$ pick one with probability $\propto \mu(T_i\cup \set{f})$ and let $S_{i+1}=T_i\cup \set{f}$.
	\end{itemize}
	\item Output $S_t$.
\end{itemize}
Our understanding of the mixing time of this random walk has gradually improved to a tight bound in a series of works \cite{ALOV19, CGM19, ALOV20}. We now know that after $t=O(k\log(k/\epsilon))$ steps the distribution of $S_t$ becomes $\epsilon$-close to $\mu$ in total variation distance \cite{ALOV20}. However, each step of the random walk requires $O(n)$ evaluations of the density $\mu$, which brings the total complexity to $O(nk\log(k/\epsilon))$, much worse than nearly-linear in $k$. This problem is exacerbated when sampling is used to solve counting \cite{JVV86}, that is approximating the partition function $\sum_{S}\mu(S)$. Known reductions from approximate counting to approximate sampling \cite{JVV86} tack on at least an additional multiplicative factor of $1/\epsilon^2$, where $\epsilon$ is the desired relative error, yielding a running time that grows at least as quickly as $\frac{n}{\epsilon^2} \poly(k, \log n)$. This is a barrier against using these sampling and counting algorithms. 

The dependence of the running time on $n$ can be prohibitive in some natural applications where $k$ is of moderate size but $n$ is very large or even potentially $\infty$ in the case of continuous determinantal point processes \cite{AFT13}. Other variants of this random walk have been studied for subclasses of log-concave polynomials \cite{HS19}, but they too have a similar total running time.

 The starting point of this work are the following questions:
\begin{quote}
	When can the dependence on $n$ be avoided? Can we sample from $\mu$ in time $\poly(k)$?
\end{quote}
A natural barrier to a positive answer is the existence of \emph{important elements} in the ground set $[n]$. Consider a distribution defined by $\mu(S)\propto\prod_{i\in S}\lambda_i$, where $\lambda_1,\dots,\lambda_n\in \R_{\geq 0}$. If one of the $\lambda_i$s is much larger than the others, we should return a set that with high probability contains $i$. However no algorithm can find this \emph{important} $i$ with fewer than $n/k$ queries to $\mu$. Our main result shows that in a sense, identifying the important elements is the only barrier.

To state our main result, it is convenient to assume that we have an oracle $\O$ that can produce i.i.d.\ samples of a fixed distribution on $[n]$. One should think of the oracle as trying to sample approximately proportional to the marginals of the target distribution $\mu:\binom{[n]}{k}\to \R_{\geq 0}$. Note that the sum of marginals $\sum_{i} \PrX{S\sim \mu}{i\in S}$ is always $k$ for a distribution $\mu:\binom{[n]}{k}\to \R_{\geq 0}$.
\begin{theorem}\label{thm:main}
	Given oracle access to a density function $\mu:\binom{[n]}{k}\to\R_{\geq 0}$ with a log-concave generating polynomial, and access to a distribution $p:[n]\to \R_{\geq 0}$ with an i.i.d.\ sampling oracle $\O$ such that for all $i$,
	\[ p(i)\geq \frac{\PrX{S\sim \mu}{i\in S}}{k+O(1)}, \]
	there is an algorithm that generates approximate samples from $\mu$ that are $\epsilon$-close in total variation distance, running in time $\poly(k,\log(1/\epsilon))$.
\end{theorem}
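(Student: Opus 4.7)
I plan to run the down-up Markov chain of \cite{ALOV19, ALOV20} on $\mu$, accelerating its bottleneck ``up step'' via rejection sampling against the proposal $p$. Starting from some $S_0 \in \binom{[n]}{k}$ and iterating $t = O(k\log(k/\epsilon))$ times, each iteration performs a standard down step (pick $e \in S_i$ uniformly, set $T := S_i \setminus \{e\}$) at cost $O(k)$, followed by an \emph{accelerated up step}: repeatedly draw $f \sim p$ through $\O$ and accept with probability $\mu(T \cup \{f\}) / (Z(T)\, p(f)\, M(T))$, where $Z(T) := \sum_f \mu(T \cup \{f\})$ and $M(T) := \max_f \mu(T \cup \{f\}) / (Z(T)\, p(f))$ (with acceptance set to $0$ for $f \in T$). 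Setting $S_{i+1} := T \cup \{f\}$ on the first acceptance faithfully implements the down-up walk, since the returned $f$ has the correct conditional distribution $\mu(T \cup \{f\}) / Z(T)$. Therefore the tight mixing analysis of \cite{ALOV20} applies and $S_t$ is $\epsilon$-close to $\mu$ in total variation after $O(k\log(k/\epsilon))$ iterations.

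\textbf{Reduction to a negative dependence inequality.} The expected number of proposals per up step equals $M(T)$. In stationarity, $T$ follows $\nu(T) \propto Z(T)$, so using the hypothesis $p(f) \geq \PrX{S \sim \mu}{f \in S}/(k + O(1))$,
\[
\ExX{T \sim \nu}{M(T)} \;=\; \frac{1}{k} \sum_T \max_f \frac{\mu(T \cup \{f\})}{p(f)} \;\leq\; \frac{k + O(1)}{k} \sum_T \max_f \frac{\mu(T \cup \{f\})}{\PrX{S \sim \mu}{f \in S}}.
\]
Bounding the right-hand sum by $\poly(k)$ would give $\poly(k)$ expected proposals per step, which combined with the $O(k\log(k/\epsilon))$ iteration count yields the stated total runtime of $\poly(k, \log(1/\epsilon))$ oracle calls.

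\textbf{Main obstacle.} The heart of the proof is the \emph{negative dependence inequality}
\[
\sum_{T \in \binom{[n]}{k-1}} \max_f \frac{\mu(T \cup \{f\})}{\PrX{S \sim \mu}{f \in S}} \;=\; \poly(k),
\]
which the abstract explicitly advertises as the paper's main new technical ingredient. The naive swap of $\max$ for $\sum$ yields only $n$, far too weak, so the pointwise maximum is essential. My plan is to exploit that the conditional distribution $\mu(\cdot \mid T \subset S)$ has generating polynomial $\prod_{i \in T} \partial_i g_\mu$, which remains log-concave on $\R_{\geq 0}^n$ since partial differentiation preserves log-concavity of homogeneous polynomials. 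Using the Hessian inequalities defining log-concavity, I would derive a pointwise bound on how much conditioning on $T$ can inflate the marginal of any single $f$, then sum over $T$ by a telescoping argument peeling off elements of $T$ one at a time. The existing spectral-independence bounds of \cite{ALOV20} control only the operator norm of the influence matrix, which is strictly weaker than the pointwise-max aggregated statement required here; developing the correct pointwise negative dependence inequality for log-concave polynomials will be the main technical challenge.
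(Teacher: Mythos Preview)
Your approach diverges substantially from the paper's, and as written it has a genuine gap that prevents it from achieving $\poly(k)$ time.

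\textbf{What the paper does.} The paper does \emph{not} accelerate the up step of the down--up walk on $[n]$. Instead it builds a two-level hierarchical sampler. The outer chain $\Mc_{\mu,p}^t$ works as follows: given the current state $S_i$, draw $t=O(k^2)$ i.i.d.\ elements $\rho_1,\dots,\rho_t$ from $p$, permute $S_i$ into this sequence to form $\sigma$ of length $t+k$, and then sample $S_{i+1}$ from the reweighted distribution $\mu_{\sigma,p}(S)\propto \mu(\sigma_S)\prod_{i\in S}p(\sigma_i)^{-1}$ on $\binom{[t+k]}{k}$. A coupling argument shows that a \emph{single} outer step is already $1/4$-close to $\mu$; the inner sampling from $\mu_{\sigma,p}$ is done by the ordinary down--up walk, but now on a ground set of size $O(k^2)$, so each inner step costs $O(k^2)$ queries. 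The negative dependence inequality that drives the coupling is
\[
\PrX{S\sim\mu}{S\subseteq T}\le\parens*{\tfrac{1}{k}\sum_{i\in T}\PrX{S\sim\mu}{i\in S}}^{k},
\]
proved by applying the KL-contraction $\D{\nu D_{k\to 1}\mid \mu D_{k\to 1}}\le \tfrac{1}{k}\D{\nu\mid\mu}$ of \textcite{CGM19} to $\nu=\mu(\cdot\mid S\subseteq T)$. This is a statement about containment probabilities, not about the ratios $\mu(T\cup\{f\})/\PrX{}{f\in S}$ that your plan requires.

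\textbf{Where your plan breaks.} The rejection step you describe accepts $f$ with probability $\mu(T\cup\{f\})/\bigl(Z(T)\,p(f)\,M(T)\bigr)$, which requires knowing (an upper bound on) $M(T)=\max_f \mu(T\cup\{f\})/(Z(T)p(f))$. Computing either $Z(T)$ or $M(T)$ takes $\Theta(n)$ evaluations of $\mu$, so every up step still costs $\Theta(n)$ and nothing has been gained. You cannot replace $M(T)$ by a uniform bound over all $T$ without that bound being enormous, and your own inequality only controls the \emph{average} of $M(T)$ over $T\sim\nu$. Moreover, even that average is computed at stationarity; during burn-in $T$ is distributed far from $\nu$, and you give no argument bounding the rejection cost there. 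Finally, the inequality $\sum_T\max_f \mu(T\cup\{f\})/\PrX{}{f\in S}=\poly(k)$ that you propose is neither stated nor proved in the paper; the naive route through $\mu(T\cup\{f\})\le\prod_{i\in T\cup\{f\}}\PrX{}{i\in S}$ yields only an $e^{k}$-type bound via elementary symmetric polynomials, and your sketched ``Hessian plus telescoping'' plan does not explain how to beat this. The paper's domain-restriction trick sidesteps all of these issues simultaneously: once you commit to a random subuniverse of size $O(k^2)$, there is no rejection constant to compute and the ordinary down--up analysis applies verbatim on the small ground set.
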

A natural question is, where does the oracle $\O$ come from? When $\mu$ is in approximately isotropic position, that is when $\PrX{S\sim \mu}{i\in S}$ is nearly the same for all $i$, then $\O$ can simply return a uniformly random element in $[n]$. See \cite{OR18} for some examples of isotropic distributions with log-concave generating polynomials (where the isotropy stems from symmetries in the ground set). 

Any non-isotropic distribution $\mu$ can be put in near-isotropic position by an operation we call subdivision, akin to linear transformation of continuous log-concave densities. In this operation a larger ground set $[m]$ together with a projection map $\pi:[m]\to [n]$ is used to define a new $\mu':\binom{[m]}{k}\to \R_{\geq 0}$: A sample $S\sim \mu'$ is obtained by first sampling $\set{e_1,\dots,e_k}\sim \mu$ and then replacing each $e_i$ with a \emph{uniformly random} element of $\pi^{-1}(e_i)$. The marginals of the new elements are of the form $\PrX{S\sim \mu}{i\in S}/\card{\pi^{-1}(i)}$. By choosing $\card{\pi^{-1}(i)}$ to be approximately proportional to $\PrX{S\sim \mu}{i\in S}$, near-isotropic position is achieved.

To keep the exposition clean, instead of ``transforming'' distributions $\mu$ into near-isotropic position, we instead change our algorithms to be aware of the marginals through the distribution $p$ and the oracle $\O$. This is similar in the continuous sampling literature where, instead of changing the distribution, we change, say the ball-walk algorithm, to an ``ellipsoid''-walk. Our algorithms perform exactly the same way as if we had performed a subdivision with infinitely large $m$, and we invite the reader to verify the equivalent subdivided forms of the algorithms.

For any given probability distribution $p:[n]\to\R_{\geq 0}$, an oracle $\O$ can be constructed with preprocessing time $O(n)$ that generates i.i.d.\ samples in time $O(\log n)$. So to leverage \cref{thm:main} it is enough to ``approximate'' the marginals of $\mu$, namely $\PrX{S\sim \mu}{i\in S}$ sufficiently well. Approximating the marginals can be done by approximate counting which reduces back to the approximate sampling task \cite{JVV86}. Naively one could use the ``down-up'' random walk to once-and-for-all approximate the marginals of $\mu$. Subsequently there is no need to recompute the marginals, and each subsequent sample can be generated in $\poly(k, \log n, \log(1/\epsilon))$ time. This approach is not satisfactory as the preprocessing step requires at least $\Omega(n)$ samples from $\mu$ to even cover all elements; with each sample taking time linear in $n$, the resulting running time will be quadratic in $n$. Instead we show how to use a careful cooling schedule, combined with \cref{thm:main}, to both approximate the marginals well enough, and to approximate the partition function.
\begin{theorem}\label{thm:oracle-construct}
	Given oracle access to a distribution $\mu$ with a log-concave generating polynomial, there is a randomized algorithm that constructs a sampling oracle $\O$ satisfying the assumptions of \cref{thm:main} in time $n\, \poly(k, \log n, \log(1/\delta))$ with probability $1-\delta$.
\end{theorem}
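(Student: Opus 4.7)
The task is to estimate the marginals $\PrX{S\sim\mu}{i\in S}$ for all $i\in[n]$ up to constant factors, so that we can build the distribution $p$ and alias-table oracle $\O$ required by the hypothesis of \cref{thm:main}. As the introduction points out, the naive strategy of running the down-up walk to do this costs $\Omega(n^2)$ because each down-up step touches all of $[n]$ and we need enough samples to cover all elements. My plan is to bootstrap: run a cooling schedule in which each stage uses \cref{thm:main} itself (whose per-sample cost is only $\poly(k,\log n)$ once an adequate oracle is in hand) to produce the samples that calibrate the oracle for the next stage.

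Concretely, I would design a sequence of distributions $\mu_0,\mu_1,\dots,\mu_T=\mu$ obtained from $\mu$ by varying external fields $\lambda^{(t)}\in\R_{\geq 0}^n$, i.e.\ $\mu_t(S)\propto\mu(S)\prod_{i\in S}\lambda^{(t)}_i$. This operation preserves log-concavity of the generating polynomial since $g_{\mu_t}(z)=g_\mu(\lambda^{(t)}_1 z_1,\dots,\lambda^{(t)}_n z_n)$, so \cref{thm:main} applies at every stage. The starting $\mu_0$ would be chosen so that its marginals are computable essentially for free: for example, by taking $\lambda^{(0)}$ supported on a small batch of elements so that $\supp(\mu_0)$ has polynomial size in $k$ and its marginals can be enumerated directly, or by choosing $\lambda^{(0)}$ so that the uniform distribution on this batch already satisfies the oracle condition of \cref{thm:main}.

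The main induction is as follows. Suppose at stage $t$ we maintain estimates $\hat\pi^{(t)}$ of the marginals of $\mu_t$ that are correct to within a constant factor, which define a distribution $p_t$ and an oracle $\O_t$ constructed in $O(n)$ preprocessing time. If the schedule is designed so that the true marginals of $\mu_{t+1}$ are within a constant factor of those of $\mu_t$, then $\O_t$ remains a valid input to \cref{thm:main} for $\mu_{t+1}$, and we can draw $N=\poly(k,\log n,\log(T/\delta))$ samples from $\mu_{t+1}$, each in $\poly(k,\log(T/\delta))$ time. Empirical averages over these $N$ samples yield $\hat\pi^{(t+1)}$ by a Chernoff-type concentration argument, advancing the induction; at $t=T$ we obtain the desired oracle for $\mu$ itself. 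Simultaneously, the partition function is tracked through the telescoping ratios $Z_{t+1}/Z_t=\ExX{S\sim\mu_t}{\prod_{i\in S}\lambda^{(t+1)}_i/\lambda^{(t)}_i}$, which are expectations estimable from the very same samples.

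The step I expect to be the main obstacle is the design and analysis of the schedule itself. We need $T=\poly(k,\log n)$ stages, an explicit easy starting $\mu_0$, and a guarantee that the marginal vectors of consecutive $\mu_t$ and $\mu_{t+1}$ are within a constant factor of each other. A natural candidate grows the set $\set{i:\lambda^{(t)}_i=1}$ in geometrically sized batches, or else moves a single coordinate of $\lambda^{(t)}$ by a small multiplicative amount at each step. Log-concavity of $g_\mu$, together with the negative dependence inequalities developed earlier in the paper, is exactly what is needed to bound quantitatively how far a marginal can shift under a bounded perturbation of the external field, thereby certifying that the inductive invariant is preserved. Once such a schedule is in place, the total cost is $T\cdot\bigl(O(n)+N\cdot\poly(k,\log n)\bigr)=n\cdot\poly(k,\log n,\log(1/\delta))$, and a union bound over the $T$ stages with per-stage failure probability $\delta/T$ yields overall success probability $1-\delta$, as claimed.
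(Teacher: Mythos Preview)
Your high-level plan---a cooling schedule of externally-field-tilted distributions, bootstrapping the oracle from each stage to the next via \cref{thm:main}---is exactly what the paper does. But two pieces of your execution are off.

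First, the per-stage sample count $N=\poly(k,\log n,\log(T/\delta))$ is too small. Estimating all $n$ marginals to within a constant factor (even one-sidedly, as in the paper's \cref{lem:estimation}) unavoidably requires $\Omega(n/k)$ samples: elements with marginal of order $k/n$ must be distinguished from those with much larger marginal, and a Chernoff argument needs $\Theta(n/k)$ samples just to see them. The paper uses $N=\Theta(n\cdot\poly(k,\log n))$ samples per stage; since each sample costs only $\poly(k,\log n)$ by \cref{thm:main}, the total is still $n\cdot\poly(k,\log n)$, so the final bound survives, but your accounting (where the $n$ comes only from building the alias table) is wrong.

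Second, and more importantly, neither of your candidate schedules works: growing the support set in batches makes $\lambda_i$ jump discontinuously, so the marginal of a newly activated element is uncontrolled; moving one coordinate at a time forces $T=\Omega(n)$ stages. The paper's schedule is much simpler and sidesteps the marginal-shift analysis entirely. Fix a single admissible basis $U$ (one with $\PrX{\mu}{U}\geq 1/(2\binom{n}{k})$, found by one run of the down-up walk), set $\mu_t(S)=\lambda_t^{|S\cap U|}\mu(S)$ for a scalar $\lambda_t$, and cool $\lambda$ geometrically from $\lambda_0=n^{O(k)}$ down to $1$ in steps of $1+O(1/k^2)$. Then $\mu_0$ is within $1/\poly(n)$ of the point mass on $U$ (so its oracle is trivial), $T=O(k^3\log n)$, and---crucially---consecutive $\mu_{t-1},\mu_t$ differ \emph{pointwise} by at most $(1+O(1/k^2))^k=1+O(1/k)$. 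Pointwise closeness immediately gives closeness of marginals by the same factor, so the oracle for $\mu_{t-1}$ is automatically valid for $\mu_t$. No negative-dependence inequality is needed here; those are used only inside the proof of \cref{thm:main}, not in the cooling argument.
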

We show additionally that the partition function $\sum_{S} \mu(S)$ can be approximated using the same cooling schedule in time that improves significantly over $(n/\epsilon^2)\poly(k, \log n)$.
\begin{theorem}\label{thm:counting}
	Given oracle access to a density $\mu$ with a log-concave generating polynomial, there is a randomized algorithm that computes an $\epsilon$-relative error approximation of $\sum_{S} \mu(S)$ which runs in time $(n+1/\epsilon^2)\poly(k, \log n, \log(1/\delta))$ and succeeds with probability $1-\delta$.
\end{theorem}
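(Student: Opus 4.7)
The plan is to reduce counting to sampling via a simulated-annealing cooling schedule, combining the one-time preprocessing cost of \cref{thm:oracle-construct} with the accelerated per-sample cost of \cref{thm:main}. The two summands in the target running time correspond exactly to these two parts: the $n\cdot\poly(k,\log n,\log(1/\delta))$ term comes from building the sampling oracle $\O$ once, while the $(1/\epsilon^2)\poly(k,\log n,\log(1/\delta))$ term comes from the subsequent Monte Carlo estimation along the schedule.

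First I would invoke \cref{thm:oracle-construct} to obtain, in time $n\cdot\poly(k,\log n,\log(1/\delta))$, a sampling oracle $\O$ and a distribution $p:[n]\to\R_{\geq 0}$ with $p(i)\geq \PrX{S\sim \mu}{i\in S}/(k+O(1))$. I would then introduce a family of external-field tilts
\[ \mu_t(S)\propto \mu(S)\prod_{i\in S}w_i^{(t)},\qquad t=0,1,\dots,T, \]
interpolating between a starting point $\mu_0$ whose partition function $Z_0$ can be written down in closed form (for instance, by choosing a field so lopsided that $\mu_0$ concentrates on a single, easily enumerated, set) and the target $\mu_T=\mu$. Two structural facts keep this step inexpensive. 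First, tilting by an external field preserves log-concavity of the generating polynomial, so \cref{thm:main} applies to every $\mu_t$. Second, the marginals of $\mu_t$ differ from those of $\mu$ by a multiplicative factor controlled by $w^{(t)}$, so a coordinatewise rescaling of $p$ keeps the hypothesis of \cref{thm:main} satisfied uniformly in $t$ without rebuilding $\O$. The goal is to design the schedule with length $T=\poly(k,\log n)$ and with consecutive $\mu_t,\mu_{t+1}$ close in $\chi^2$-divergence.

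Given such a schedule, I would use the telescoping identity
\[ \sum_S\mu(S) = Z_0\prod_{t=0}^{T-1}\frac{Z_{t+1}}{Z_t},\qquad \frac{Z_{t+1}}{Z_t}=\ExX{S\sim\mu_t}{\prod_{i\in S}\frac{w_i^{(t+1)}}{w_i^{(t)}}}, \]
and estimate each ratio by a Monte Carlo average over independent samples drawn from $\mu_t$ via \cref{thm:main}. A Chebyshev argument, with $O(T/\epsilon^2)$ samples distributed across the $T$ stages, yields a $(1\pm\epsilon)$-multiplicative estimator of the product with constant success probability; a standard median-of-means boost upgrades the confidence to $1-\delta$ at a $\log(1/\delta)$ multiplicative cost. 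Since every sample costs $\poly(k,\log n,\log(1/\epsilon))$ time by \cref{thm:main}, the total sampling cost is $(1/\epsilon^2)\poly(k,\log n,\log(1/\delta))$, matching the claimed second summand.

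The main obstacle is the cooling schedule itself: one must exhibit a path of length $T=\poly(k,\log n)$ along which each $\mu_t$ is log-concave, $\O$ (after rescaling $p$) remains valid for \cref{thm:main}, and the $\chi^2$-divergence between consecutive $\mu_t$ is $O(1/T)$ so that every ratio estimator has bounded relative second moment. Establishing this last bound is where I expect the negative-dependence inequalities advertised in the abstract to enter, since they are the natural device for controlling second moments of the multiplicative statistics $\prod_{i\in S}(w_i^{(t+1)}/w_i^{(t)})$ under log-concave polynomial distributions; proving it uniformly along the path is the technical crux I would focus on.
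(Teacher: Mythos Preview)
Your high-level plan---cooling schedule, telescoping product, Monte Carlo estimation of each ratio---matches the paper. But there is a genuine gap in how you propose to sample from the intermediate $\mu_t$.

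You claim that ``the marginals of $\mu_t$ differ from those of $\mu$ by a multiplicative factor controlled by $w^{(t)}$, so a coordinatewise rescaling of $p$ keeps the hypothesis of \cref{thm:main} satisfied.'' This is false: under a tilt $\mu_t(S)\propto \mu(S)\prod_{i\in S}w_i$, the marginal of element $i$ is $w_i\,\partial_i g_\mu(w)/g_\mu(w)$, which depends on \emph{all} of $w$ through $g_\mu$, not just on $w_i$. In the schedule the paper actually uses ($w_i=\lambda$ for $i$ in a fixed admissible set $U$, $w_i=1$ otherwise, with $\lambda_0=n^{\Theta(k)}$), the distribution $\mu_0$ is essentially the point mass on $U$, so its marginals bear no resemblance to those of $\mu$ and no rescaling of a $p$ built for $\mu$ will serve. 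The paper's fix is that the cooling schedule in \cref{thm:counting} is \emph{the same} schedule used inside the proof of \cref{thm:oracle-construct}, and that proof already constructs a valid $p$ for \emph{every} intermediate $\mu_i$ by induction: it starts at $\mu_0$, where the marginals are trivial since $\mu_0\approx\1_U$, and at each step uses \cref{thm:main} together with \cref{lem:estimation} to refresh the marginal estimates before advancing to $\mu_{i+1}$. The $n\cdot\poly(k,\log n,\log(1/\delta))$ preprocessing thus buys not one oracle but a whole family along the path.

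A second, smaller misattribution: you expect the negative dependence inequalities to control the second moment of the ratio estimator. They are not needed here. The paper takes $\lambda_{i-1}/\lambda_i\le 1+O(1/k^2)$, so $\mu_{i+1}(S)/\mu_i(S)=(\lambda_{i+1}/\lambda_i)^{|S\cap U|}$ lies \emph{deterministically} in $1\pm O(1/k)$, and the variance bound is immediate. The negative dependence inequalities (\cref{lem:condition}, \cref{lem:neg2}) enter only inside the proof of \cref{thm:main}.
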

As a corollary we obtain algorithms that can count bases of small-rank matroids with high precision in nearly linear time.
\begin{corollary}\label{cor:main}
	Given an oracle that answers independence queries for a matroid of rank $O(\poly\log n)$ over a ground set of $n$ elements, there is an algorithm that approximately counts bases within a multiplicative factor of $1+1/\sqrt{n}$ with high probability in nearly linear time $\tilde O(n)$.
\end{corollary}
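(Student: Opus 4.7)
The plan is to invoke \cref{thm:counting} on the uniform distribution over bases of the given matroid. Since all bases of a matroid of rank $k$ have exactly $k$ elements, setting $\mu(S) = 1$ when $S$ is a basis and $0$ otherwise defines a density on $\binom{[n]}{k}$ whose generating polynomial coincides with the matroid's basis generating polynomial. The latter is log-concave on $\R_{\geq 0}^n$ by the results cited in the introduction \cite{ALOV19, BH19}, so the log-concavity hypothesis of \cref{thm:counting} is satisfied.

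Next I would realize the density oracle required by \cref{thm:counting} using the given independence oracle: on any input $S$ of size $k$, a single independence query decides whether $S$ is independent, which (since $\card{S}$ equals the rank) is the same as deciding whether $S$ is a basis and hence determines $\mu(S)$. So one independence query suffices per density evaluation, and its cost is absorbed into the $\poly(k,\log n)$ factor in \cref{thm:counting}. With this setup, $\sum_S \mu(S)$ is exactly the number of bases of the matroid.

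To finish, I would set $\epsilon = 1/\sqrt{n}$, so that $1/\epsilon^2 = n$, and take $\delta = n^{-c}$ for a suitable constant $c$ to obtain high-probability success. The running time guaranteed by \cref{thm:counting} then becomes $(n + 1/\epsilon^2)\,\poly(k, \log n, \log(1/\delta)) = n\cdot \poly(k, \log n)$, and using the hypothesis $k = O(\poly\log n)$ this collapses to $\tilde O(n)$.

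There is no serious obstacle, since the corollary is a direct instantiation of \cref{thm:counting} for the basis-uniform distribution. The only substantive facts to invoke are (i) the log-concavity of the matroid basis generating polynomial, which is a cited theorem, and (ii) the reduction of the density oracle to a single independence query, which is immediate from the fact that bases are exactly the independent sets whose size equals the rank. Everything else is parameter-fitting and a careful accounting of polylog factors.
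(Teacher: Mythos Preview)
Your proposal is correct and matches the paper's approach: the paper states \cref{cor:main} as an immediate corollary of \cref{thm:counting} without giving a separate proof, and your argument spells out exactly the intended instantiation (uniform distribution over bases, log-concavity from \cite{ALOV19,BH19}, density oracle via a single independence query, and the parameter choices $\epsilon=1/\sqrt{n}$, $\delta=n^{-c}$).
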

By trivial reductions, this result also automatically allows us to estimate the number of small independent sets in nearly linear time, regardless of the rank of the matroid. For example, we can use this result to count forests of size $k=O(\poly\log n)$ in graphs in nearly linear time; for some motivating applications of counting forests, see \cite{GKRZ14}.
  
\subsection{Related Work}

Our results are related to recent work on accelerated sampling for determinantal point processes (DPP), which form a subset of distributions with log-concave polynomials and have a variety of applications in machine learning \cite{KT12}, statistics \cite{BLXA17}, and graph theory \cite{Gue83} (e.g., uniform sampling of spanning trees). DPPs enjoy a number of properties not satisfied by the general class of distributions with log-concave polynomials, such as closed form expressions for the partition function and the marginals, which make accelerated sampling easier. In particular, recent results by \cite{DWH18,DWH19,Der19,DCV19} take advantage of these properties to obtain $\poly(k)$ time sampling algorithms based on importance sampling proportional to the marginals. Crucially, these algorithms take advantage of the special structure of DPPs (such as the closed form of the partition function), which is why they cannot be directly extended to general distributions with log-concave polynomials. Even slight variants of DPPs, such as exponentiated DPPs \cite{MSJ18} do not enjoy the closed form expressions of DPPs and need our new framework. The differences in our approach, which make accelerated sampling possible for this broader class, include a hierarchical Markov chain procedure to avoid computing an exact partition function, and a cooling schedule for efficiently approximating the marginals. 

Negative dependence inequalities related to those we prove in our analysis have been studied for the class of strongly Rayleigh (SR) distributions \cite{BBL09}. This class includes all DPPs, but not all distributions with log-concave polynomials. For example, a uniform distribution over the bases of a matroid is not SR if the matroid is not balanced. Surprisingly, despite extensive literature on the negative correlation properties of SR distributions \cite{PP14}, the negative dependence inequality we prove for all distributions with log-concave polynomials appears to be new even for SR distributions. Some special forms of approximate negative correlation have been obtained for matroids and log-concave polynomials \cite{HSW18}; as an additional corollary of our proof techniques we rederive these correlation bounds and significantly generalize them.

The question of approximately sampling or counting bases of a matroid has been studied for a long time \cite[see, e.g.,][]{FM92, JS02, GJ18}, but most of the attention has been focused on proving just polynomial time efficiency, with some exceptions. Sampling random spanning trees can now be done in nearly linear time \cite{Sch18, ALOV20}, but for the seemingly related problem of sampling random forests \cite{GKRZ14}, the jury is still out. When it comes to counting algorithms, the situation is much worse since most results are only based on powerful but generic counting to sampling reductions \cite{JVV86}; these reductions often blow up the running time and make the algorithm impractical. This is despite the fact that the class of reliability problems for graphs, code, truss systems, etc. \cite{CC97, Cam98}, are all natural counting questions.

\subsection{Techniques}

Unlike most prior work on sampling from matroids, our algorithm does not just perform a walk on the basis exchange graph. Rather we combine walks on the basis exchange graph with macro steps that choose a small important subset of the ground set, and only permit walks on that small part. This significantly speeds up the sampling algorithm, by not allowing the basis exchange walks to focus on unimportant elements.

One of the key ingredients in our proof is a set of negative dependence inequalities, which yield interesting facts about matroids. If $S$ is a random basis of a matroid, think of a random forest of size $k$ in a graph, and $T$ is a fixed set, our inequalities upper bound probabilities of the following types of events: $S=T$, $S\supseteq T$, $S\subseteq T$, based on marginals of $S$. We apply recently derived Modified Log-Sobolev Ineqaulities for matroids and log-concave polynomials \cite{CGM19} in novel ways to derive these inequalities.

\subsection{Acknowledgements}

The first author thanks Jan Vondr\'{a}k for stimulating discussions related to negative dependence inequalities.
The second author thanks the NSF for funding via the NSF TRIPODS program.
	\section{Preliminaries}
\label{sec:prelims}

We use $[n]$ to denote the set $\set{1,\dots,n}$, and $\binom{[n]}{k}$ to denote the family of size $k$ subsets of $[n]$. All $\log$s are taken in base $e$.

We denote a sequence of length $t$ by $\sigma=\langle \sigma_1,\dots,\sigma_t\rangle$ and for an index set $I\subseteq [t]$ we let
\[ \sigma_I:=\set{\sigma_i\given i\in I}. \]

We use one of the common forms of the Chernoff bound.
\begin{lemma}[Chernoff Bound]\label{lem:chernoff}
Let $X_1,...,X_m$ be independent Bernoulli variables and let $S=\sum_iX_i$. Then, for any
$\epsilon\in(0,1)$, we have:
\begin{align*}
  \Pr*{\abs{S-\Ex{S}}\geq \epsilon\cdot\Ex{S}}\leq 2e^{-\epsilon^2\Ex{S}/3}.
\end{align*}
\end{lemma}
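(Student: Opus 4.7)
The plan is to follow the classical Chernoff/Bernstein argument via the exponential moment method, treating the upper and lower tails separately and combining them with a union bound. The fact that the $X_i$ are independent Bernoullis and that $\epsilon\in(0,1)$ are exactly the hypotheses needed to make the standard derivation go through, so essentially no creativity is required; the goal is just to lay out the steps cleanly.

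First I would handle the upper tail $\Pr[S \geq (1+\epsilon)\E[S]]$. Let $\mu = \E[S] = \sum_i p_i$, where $p_i = \Pr[X_i = 1]$. For any $t > 0$, Markov's inequality applied to $e^{tS}$ gives
\[
\Pr*{S \geq (1+\epsilon)\mu} \leq e^{-t(1+\epsilon)\mu}\, \E[e^{tS}].
\]
By independence, $\E[e^{tS}] = \prod_i \E[e^{tX_i}] = \prod_i \bigl(1 + p_i(e^t - 1)\bigr)$. Using $1+x \leq e^x$ term by term, this is at most $\exp\bigl(\mu(e^t-1)\bigr)$. Choosing the standard optimizer $t = \log(1+\epsilon) > 0$ yields
\[
\Pr*{S \geq (1+\epsilon)\mu} \leq \exp\bigl(\mu\bigl(\epsilon - (1+\epsilon)\log(1+\epsilon)\bigr)\bigr).
\]

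Next I would reduce this exponent to the stated form using the elementary analytic inequality $\epsilon - (1+\epsilon)\log(1+\epsilon) \leq -\epsilon^2/3$ for $\epsilon \in (0,1)$. This is a one-variable calculus check: define $f(\epsilon) = (1+\epsilon)\log(1+\epsilon) - \epsilon - \epsilon^2/3$, note $f(0) = 0$, and verify $f'(\epsilon) = \log(1+\epsilon) - 2\epsilon/3 \geq 0$ on $[0,1]$ by a second derivative computation. This step is the only non-mechanical piece, but it is entirely routine and I would not grind through it in full.

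For the lower tail $\Pr[S \leq (1-\epsilon)\mu]$, I would run the symmetric argument with $t < 0$: apply Markov's inequality to $e^{-tS}$ with $t > 0$, factor via independence, bound using $1+x \leq e^x$ to get $\exp(\mu(e^{-t}-1))$, and optimize at $t = -\log(1-\epsilon)$. This yields
\[
\Pr*{S \leq (1-\epsilon)\mu} \leq \exp\bigl(\mu\bigl(-\epsilon - (1-\epsilon)\log(1-\epsilon)\bigr)\bigr) \leq e^{-\epsilon^2\mu/2},
\]
where the last inequality again follows from a standard one-variable calculus check on $[0,1)$, and $e^{-\epsilon^2\mu/2} \leq e^{-\epsilon^2\mu/3}$ since $\mu \geq 0$. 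A union bound over the two tails contributes the factor of $2$ in the statement. The only obstacle worth flagging is making the constant $1/3$ come out correctly on the upper tail; this is where the inequality $\epsilon - (1+\epsilon)\log(1+\epsilon) \leq -\epsilon^2/3$ is tight enough to be useful but loose enough to admit the clean closed-form bound, and it is the reason the restriction $\epsilon \in (0,1)$ appears in the hypothesis.
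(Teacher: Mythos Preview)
Your proposal is correct and is the standard derivation of the multiplicative Chernoff bound; there is nothing to compare it against, since the paper states \cref{lem:chernoff} as a known fact in the preliminaries and offers no proof. One minor remark: your claim that $f'(\epsilon)=\log(1+\epsilon)-2\epsilon/3\ge 0$ on $[0,1]$ does not follow from the second derivative alone (which changes sign at $\epsilon=1/2$); you also need to check the endpoint $f'(1)=\log 2-2/3>0$, after which unimodality of $f'$ together with $f'(0)=0$ gives the conclusion.
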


We denote the directional derivative operator in direction $v\in \R^n$ with $\partial_v$:
\[ \partial_v=v_1\partial_1+\dots+v_n\partial_n. \]

We use $\R[z_1,\dots,z_n]$ to denote the set of polynomials with real coefficients in variables $z_1,\dots,z_n$. We call a polynomial homogeneous (of degree $k$) if all of its terms have the same degree (equal to $k$). We call a polynomial multiaffine if no variable in it appears with degree more than $1$.

For a distribution or density function $\mu:\binom{[n]}{k}\to \R_{\geq 0}$, we define the generating polynomial $g_\mu$ to be
\[ g_\mu(z_1,\dots,z_n):=\sum_{S\in \binom{[n]}{k}} \mu(S)\prod_{i\in S} z_i. \]
Note that by definition $g_\mu$ is both multiaffine and homogeneous.

We use $\D{\nu \mid \mu}$ to denote the Kullback-Leibler divergence between distributions $\nu$ and $\mu$ defined as follows:
\[ \D{\nu \mid \mu}:=\ExX*{S\sim \mu}{\frac{\nu(S)}{\mu(S)}\log \frac{\nu(S)}{\mu(S)}}=\ExX*{S \sim \nu}{\log\frac{\nu(S)}{\mu(S)}}.  \]
We use $\norm{\nu-\mu}_\tv$ to denote the total variation distance between distributions $\nu$ and $\mu$:
\[ \norm{\nu-\mu}_\tv := \frac{1}{2}\sum_{S} \abs{\nu(S)-\mu(S)}. \]

\subsection{Markov Chains}

For a Markov chain $P$ with stationary distribution $\mu$, we define the mixing time $\tmix(P, \epsilon)$ to be the minimum time $t$ such that for all starting states $S\in \supp(\mu)$
\[ \norm{P^t(S, \cdot)-\mu}_\tv \leq \epsilon. \]

\begin{theorem}\label{thm:mcmc-classical}
If an irreducible aperiodic Markov chain with stationary distribution $\mu$ and transition matrix $P$
  satisfies $\norm{P^t(S,\cdot)-\mu}_\tv\leq1/4$ for all $S\in \supp(\mu)$
  and some $t\geq 1$, then for any $\epsilon\in (0, 1/4]$
  \[ \tmix(P, \epsilon) \leq t \log(1/\epsilon).\]
\end{theorem}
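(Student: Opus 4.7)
The plan is to reduce to the standard submultiplicativity of a symmetric ``two-point'' total variation distance and then iterate. Let $d(s):=\max_{S\in \supp(\mu)}\norm{P^s(S,\cdot)-\mu}_\tv$ be the quantity we want to control, and introduce the auxiliary distance
\[ \bar d(s) := \max_{S,S'\in \supp(\mu)}\norm{P^s(S,\cdot)-P^s(S',\cdot)}_\tv. \]
Since $\mu$ is stationary, $\mu(\cdot)=\sum_{S'}\mu(S')P^s(S',\cdot)$, so the triangle inequality yields $d(s)\leq \bar d(s)\leq 2d(s)$. In particular, the hypothesis $d(t)\leq 1/4$ immediately gives $\bar d(t)\leq 1/2$.

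The crux of the argument is the submultiplicativity $\bar d(s_1+s_2)\leq \bar d(s_1)\bar d(s_2)$. I would prove this by writing
\[ P^{s_1+s_2}(S,\cdot)-P^{s_1+s_2}(S',\cdot) = \sum_U \bigl[P^{s_1}(S,U)-P^{s_1}(S',U)\bigr]P^{s_2}(U,\cdot), \]
and Hahn-decomposing the signed measure $P^{s_1}(S,\cdot)-P^{s_1}(S',\cdot)$ into its positive and negative parts, each of total mass $a:=\norm{P^{s_1}(S,\cdot)-P^{s_1}(S',\cdot)}_\tv\leq \bar d(s_1)$. Normalizing those parts to probability distributions $\tilde\pi,\tilde\pi'$ on the state space, the expression above equals $a\bigl(\E_{U\sim \tilde\pi}P^{s_2}(U,\cdot)-\E_{U'\sim \tilde\pi'}P^{s_2}(U',\cdot)\bigr)$. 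For any coupling of $(U,U')$ with marginals $\tilde\pi,\tilde\pi'$, the triangle inequality (applied under the expectation) bounds the TV distance of this difference by $\E\,\norm{P^{s_2}(U,\cdot)-P^{s_2}(U',\cdot)}_\tv\leq \bar d(s_2)$. Multiplying by $a$ gives the desired inequality.

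Iterating submultiplicativity yields $\bar d(\ell t)\leq \bar d(t)^\ell\leq 2^{-\ell}$, hence $d(\ell t)\leq 2^{-\ell}$. Taking $\ell=\lceil \log_2(1/\epsilon)\rceil$ forces $d(\ell t)\leq \epsilon$, so $\tmix(P,\epsilon)\leq \lceil \log_2(1/\epsilon)\rceil\, t$; this matches the claimed bound $t\log(1/\epsilon)$ up to the constant $1/\log 2$ and an additive $t$ from the ceiling, which the theorem statement treats implicitly.

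The main obstacle, though entirely standard, is the submultiplicativity step: the Hahn decomposition and the averaging-through-a-coupling trick must be set up carefully, since the two normalized parts $\tilde\pi,\tilde\pi'$ are genuinely different distributions and one cannot simply apply a single $\bar d(s_2)$ bound to a single starting state. Everything else---the triangle-inequality comparison between $d$ and $\bar d$, and the geometric iteration---is immediate from the definitions.
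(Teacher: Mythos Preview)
The paper does not actually prove this statement: it appears in the Preliminaries as a standard fact about Markov chains and is quoted without proof. Your argument is precisely the classical textbook one (as in Levin--Peres--Wilmer): compare $d(s)$ with the two-point quantity $\bar d(s)$, establish submultiplicativity of $\bar d$, and iterate geometrically. The submultiplicativity step you outline via the Hahn decomposition is correct; note that for a finite irreducible chain $\supp(\mu)$ is the entire state space, so the intermediate states $U,U'$ automatically lie where the bound $\bar d(s_2)$ applies.

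The only substantive point is the one you already flag: your argument yields $\tmix(P,\epsilon)\leq t\lceil\log_2(1/\epsilon)\rceil$, whereas the paper's convention is $\log=\ln$, so the stated inequality $\tmix(P,\epsilon)\leq t\ln(1/\epsilon)$ is tighter by a factor of $1/\ln 2$. This constant cannot be recovered from the $d\leq\bar d\leq 2d$ sandwich alone, so strictly speaking you have proved a slightly weaker inequality than the one written. In the paper this makes no difference, since the theorem is only invoked inside $\poly(k,\log(1/\epsilon))$ running-time bounds, but it is worth being explicit that the exact constant in the statement is not what your proof delivers.
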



\subsection{Log-Concave Polynomials}

We call a polynomial $g(z_1,\dots,z_n)$ with nonnegative coefficients log-concave if it is log-concave as a function over $\R_{\geq 0}^n$, i.e., for any $x, y\in \R_{\geq 0}^n$ and $\lambda\in (0, 1)$ we have
\[ g(\lambda x+(1-\lambda)y)\geq g(x)^\lambda g(y)^{1-\lambda}. \]

One of the key operations preserving log-concavity of a polynomial is composition with a linear operator $T:\R^m\to \R^n$ for which $T(\R_{\geq 0}^m)\subseteq \R_{\geq 0}^n$ \cite{AOV18}.

Log-concavity of a polynomial is in general not preserved under differentiation. Prior work has considered two classes of polynomials, called strongly log-concave \cite{Gur09}, and completely log-concave or Lorentzian \cite{AOV18, BH19}, to deal with this issue. However, not being closed under derivatives is an artifact of high or low powers of variables dominating others and masking non-log-concavity at middle scales. In particular, for homogeneous multiaffine polynomials (which includes all of the polynomials considered here), this masking does not happen and all three notions coincide.
\begin{lemma}[{\cite[see, e.g.,][]{ALOV20}}]\label{lem:clc}
	Let $g\in \R[z_1,\dots,z_n]$ be a multiaffine homogeneous polynomial with nonnegative coefficients. If $g$ is log-concave, then it is completely log-concave as well, which means that for any $k\in \Z_{\geq 0}$ and directions $v_1,\dots, v_k\in \R_{\geq 0}^n$, the following polynomial is log-concave:
	\[ \partial_{v_1}\cdots \partial_{v_k} g. \]
\end{lemma}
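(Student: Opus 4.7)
The plan is to show that $g$ is a Lorentzian polynomial in the sense of Br\"and\'en--Huh, and then invoke their equivalence theorem identifying Lorentzian polynomials with completely log-concave ones. Recall a homogeneous polynomial with nonnegative coefficients is Lorentzian when (a) its support is an M-convex subset of $\binom{[n]}{d}$, and (b) for every choice of nonnegative directions $v_1, \ldots, v_{d-2} \in \R_{\geq 0}^n$, the resulting quadratic form $\partial_{v_1} \cdots \partial_{v_{d-2}} g$ has Hessian with at most one positive eigenvalue.

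To verify (b) I would use the Hessian characterization of log-concavity for homogeneous polynomials with nonnegative coefficients of degree $d' \geq 2$: log-concavity on $\R_{>0}^n$ is equivalent to $\nabla^2 h(x)$ having at most one positive eigenvalue at every $x \in \R_{>0}^n$. This follows from $\nabla^2 \log h = \nabla^2 h / h - \nabla h \, \nabla h^\top / h^2$ together with Euler's identity $x^\top \nabla^2 h(x) x = d'(d'-1) h(x) > 0$, which forces at least one positive eigenvalue, while log-concavity forces at most one. The multiaffine structure then makes the key computation completely explicit: expanding $g(z) = \sum_S c_S \prod_{i\in S} z_i$ and collecting terms gives $\partial_v^{d-2} g(z) = (d-2)! \sum_{i<j} z_i z_j \, \partial_i \partial_j g(v)$, a degree-two polynomial whose (constant) Hessian equals $(d-2)! \, \nabla^2 g(v)$. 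Log-concavity of $g$ gives that $\nabla^2 g(v)$ has at most one positive eigenvalue for $v \in \R_{>0}^n$, extending to $v \in \R_{\geq 0}^n$ by upper semi-continuity of the positive-eigenvalue count. A polarization argument then extends the conclusion to distinct nonnegative directions $v_1, \ldots, v_{d-2}$ (the mixed Hessian is extracted as the coefficient of the monomial $t_1 \cdots t_{d-2}$ in $\nabla^2 g(\sum t_j v_j)$, a matrix-valued polynomial in $t$ whose specializations to nonnegative $t$ all have at most one positive eigenvalue).

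Condition (a) follows from log-concavity by a standard combinatorial argument: if the support violated the exchange property (say, splitting into pieces admitting a structure of the form $z_i z_j + z_k z_l$ in the Hessian at some point), the Hessian would acquire a block structure with two independent positive eigenvalues, contradicting the characterization. With (a) and (b) in hand, the Br\"and\'en--Huh theorem concludes that $g$ is completely log-concave, which is precisely the statement of the lemma. The main obstacle is the Br\"and\'en--Huh equivalence itself---a nontrivial result requiring careful analysis of hyperbolicity cones of associated specialized polynomials---but it is by now well established in the literature and can be invoked as a black box to close the proof.
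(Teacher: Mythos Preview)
The paper does not supply its own proof of this lemma; it is simply quoted from the literature with a citation to \cite{ALOV20}. So there is nothing in the paper to compare against directly, and your proposal is an attempt to reconstruct a proof.

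Your overall route---verify the Lorentzian axioms and invoke the Br\"and\'en--Huh equivalence---is legitimate, and your computation that for multiaffine homogeneous $g$ one has $\nabla^2_z\bigl(\partial_v^{d-2} g\bigr) = (d-2)!\,\nabla^2 g(v)$ is correct and nicely exploits the structure. But the polarization step contains a real gap. You assert that because every nonnegative specialization of $t\mapsto \nabla^2 g\bigl(\sum_j t_j v_j\bigr)$ has at most one positive eigenvalue, the coefficient matrix of $t_1\cdots t_{d-2}$ inherits this. Extracting that coefficient is an alternating (inclusion--exclusion) combination of evaluations, and the cone of symmetric matrices with at most one positive eigenvalue is not closed under such combinations, so the inference fails as written. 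This is not a cosmetic issue: the passage from equal directions $\partial_v^{d-2}$ to distinct coordinate directions $\partial_{i_1}\cdots\partial_{i_{d-2}}$ is exactly the nontrivial part of the lemma.

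The fix that appears in the cited literature uses multiaffinity in a different way. Since $g$ is affine in each $z_i$, one has $\partial_i g(z)=\lim_{t\to\infty} t^{-1} g(z+te_i)$, a pointwise limit of log-concave functions (affine reparametrization followed by rescaling), hence log-concave; iterating gives that every $\partial_{i_1}\cdots\partial_{i_k} g$ is log-concave, and the cited equivalence (strongly log-concave $\Leftrightarrow$ completely log-concave for homogeneous polynomials) finishes. Equivalently, staying inside your framework: the Hessian of $\partial_{i_1}\cdots\partial_{i_{d-2}} g$ is the limit of $(z_{i_1}\cdots z_{i_{d-2}})^{-1}\nabla^2 g(z)$ as those coordinates tend to infinity (again by multiaffinity), and the one-positive-eigenvalue property survives positive scaling and limits. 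Replace your polarization paragraph with this limit argument and the proof goes through.
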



\subsection{Down and Up Operators}

We define two operators borrowed from the literature on high-dimensional expanders \cite{KM16, DK17, KO20}. The down operator acts on the distribution of a random set $S$ of size $k$ and extracts a uniformly random subset of size $l$.
\begin{definition}\label{def:down}
	For nonnegative integers $l\leq k\leq n$, let $D_{k\to l}\in \R^{\binom{[n]}{k}\times \binom{[n]}{l}}$ be the $k$ to $l$ down operator defined as
	\[ D_{k\to l}(S, T)=\begin{cases}
		\frac{1}{\binom{k}{l}}& \text{if }T\subseteq S,\\
		0 & \text{otherwise}.\\
	\end{cases}
 	\]
\end{definition}
Note that if $\mu:\binom{[n]}{k}\to \R_{\geq 0}$ is a distribution on sets of size $k$, then $\mu D_{k\to l}$ is a distribution on sets of size $l$. Further note that the down operators compose in the expected way:
\[ D_{k\to l}D_{l\to m}=D_{k\to m}. \]

When there is a background distribution $\mu:\binom{[n]}{k}\to \R_{\geq 0}$ we can define an up operator, which is the time-reversal of the down operator for stationary distribution $\mu$.
\begin{definition}
	For nonnegative integers $l\leq k\leq n$ and background distribution $\mu:\binom{[n]}{k}\to \R_{\geq 0}$, let $U_{l\to k}\in \R^{\binom{[n]}{l}\times \binom{[n]}{k}}$ be the up operator defined as
	\[ 
		U_{l\to k}(T, S)=\begin{cases}
			\frac{\mu(S)}{\sum_{S'\supseteq T} \mu(S')}& \text{if }T\subseteq S,\\
			0& \text{otherwise}.
		\end{cases}
	\]
\end{definition}

It is easy to see that $D_{k\to l}U_{l\to k}$ defines a time-reversible Markov chain on the state space $\binom{[n]}{k}$ with stationary distribution $\mu$. This can be seen by considering a bipartite graph between $\binom{[n]}{k}$ and $\binom{[n]}{l}$ with an edge of weight $\mu(S)$ between $S$ and $T$ whenever $T\subseteq S$. Then $D_{k\to l}$ is the operation of randomly walking from the top side to the bottom side, and $U_{l\to k}$ is randomly walking from the bottom to the top. Further the weighted degree of every node on the top is $\binom{k}{l}\mu(S) \propto \mu(S)$, which makes $\mu(S)$ the stationary distribution of $D_{k\to l}U_{l\to k}$.

These Markov chains are efficiently implementable, with oracle access to $\mu$, when $k-l=O(1)$. The difficult part is implementing $U_{l\to k}$ which naively takes time $n^{k-l}$. A sequence of works \cite{ALOV19, CGM19, ALOV20} have identified tight mixing times for distributions $\mu$ with a log-concave generating polynomial, when $l=k-1$:
\begin{theorem}[\cite{ALOV20}]\label{thm:down-up}
	Let $\mu:\binom{[n]}{k}\to \R_{\geq 0}$ be a distribution with a log-concave generating polynomial. Then $P:=D_{k\to k-1}U_{k-1\to k}$ is a Markov chain with stationary distribution $\mu$ and
	\[ \tmix(P, \epsilon)\leq O(k\log(k/\epsilon)). \]
\end{theorem}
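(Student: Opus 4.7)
The plan is to prove an entropy-contraction inequality of the form $\D{\nu P \mid \mu} \le (1 - 1/k)\, \D{\nu \mid \mu}$ for every distribution $\nu$ on $\binom{[n]}{k}$, iterate it, and then convert the resulting KL decay into total variation distance via Pinsker's inequality. Stationarity and reversibility of $\mu$ under $P = D_{k\to k-1}U_{k-1\to k}$ follow directly from the bipartite graph interpretation of the down and up operators already described in the excerpt.

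For the entropy contraction, I would first apply the data processing inequality to the up-step. Since $\mu$ is stationary, $\mu = \mu D_{k\to k-1} U_{k-1\to k}$, and hence
\[ \D{\nu P \mid \mu} = \D{\nu D_{k\to k-1} U_{k-1\to k} \mid \mu D_{k\to k-1} U_{k-1\to k}} \le \D{\nu D_{k\to k-1} \mid \mu D_{k\to k-1}}. \]
The chain rule for KL then decomposes the global entropy into the pushed-down piece plus ``local'' entropies at the links:
\[ \D{\nu \mid \mu} = \D{\nu D_{k\to k-1} \mid \mu D_{k\to k-1}} + \ExX{T \sim \nu D_{k\to k-1}}{\D{\nu_T \mid \mu_T}}, \]
where $\mu_T$ and $\nu_T$ denote the conditional distributions of the extra element $\set{e} = S\setminus T$ given $S \supseteq T$. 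Combining the two displays reduces the desired contraction to the local-to-global comparison
\[ \ExX{T \sim \nu D_{k\to k-1}}{\D{\nu_T \mid \mu_T}} \ge \frac{1}{k}\, \D{\nu \mid \mu}. \]

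The hard part is proving this local-to-global entropy inequality with the tight constant $1/k$. My plan is induction on $k$, using Lemma~\ref{lem:clc} crucially at each step: for any $T \in \binom{[n]}{k-1}$, the link distribution $\mu_T$ is read off from the coefficients of $\partial_{z_T} g_\mu$, which is still a log-concave, multiaffine, homogeneous polynomial, now of degree $1$. Univariate distributions trivially satisfy a one-step entropy contraction, and unfolding the chain-rule decomposition over all $k$ descending levels accumulates into a global contraction with rate $1/k$. A purely spectral argument in the style of earlier ``spectral independence'' results would at best yield spectral gap $\Omega(1/k)$ and hence mixing time $O(k\log(1/\mu_{\min})\log(1/\epsilon))$, which carries a spurious multiplicative $\log(1/\mu_{\min}) = \Theta(k\log n)$ factor; extracting the tight $O(k\log(k/\epsilon))$ rate requires tracking entropy (rather than variance) along the entire descending chain of derivatives.

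Finally, iterating the entropy contraction gives $\D{P^t(S_0,\cdot) \mid \mu} \le e^{-t/k}\, \log(1/\mu(S_0)) \le e^{-t/k} \cdot O(k\log n)$ for any starting state $S_0$. Choosing $t = O(k\log(k/\epsilon))$ (absorbing a subleading $k\log\log n$ term into the stated bound) drives the divergence below $\epsilon^2/2$, and Pinsker's inequality then gives $\norm{P^t(S_0,\cdot) - \mu}_\tv \le \epsilon$, completing the proof.
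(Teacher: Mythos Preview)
The paper does not prove this theorem; it is quoted from \cite{ALOV20} and used as a black box (most notably inside \cref{lem:hierarchical}). So there is no ``paper's own proof'' to compare against. That said, your proposal has a genuine gap.

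Your derivation of the one-step entropy contraction $\D{\nu P\mid\mu}\le (1-1/k)\,\D{\nu\mid\mu}$ is correct and is exactly the CGM19 ingredient recorded here as \cref{thm:MLSI}. The problem is the endgame. Iterating from a point mass gives
\[
\D{P^t(S_0,\cdot)\mid\mu}\le (1-1/k)^t\,\log\frac{1}{\mu(S_0)},
\]
and you then assert $\log(1/\mu(S_0))=O(k\log n)$. This is false for the class in question: distributions with a log-concave generating polynomial are closed under arbitrary external fields (rescaling variables), so $\mu_{\min}$ can be made arbitrarily small with $n,k$ fixed. For a concrete witness, take $g(z)=z^T\bigl((1-\delta)z_a+\delta z_b\bigr)$ with $|T|=k-1$; then $\mu_{\min}=\delta$ is unconstrained. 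Even in the unweighted-matroid case where $\log(1/\mu_{\min})\le k\log n$ does hold, your bound becomes $t=O\bigl(k(\log k+\log\log n+\log(1/\epsilon))\bigr)$, which depends on $n$. The stated theorem is uniform in $n$ and in the weights of $\mu$, and this uniformity is exactly what the present paper relies on (the inner walk in \cref{sec:hierarchical} is run on a ground set of size $\Theta(k^2)$, with the mixing bound invoked independently of that size). The $k\log\log n$ term is not ``subleading'' relative to $k\log(k/\epsilon)$; it is simply not allowed.

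In short, entropy contraction alone---the CGM19 step---yields $\tmix=O\bigl(k\log(\log(1/\mu_{\min})/\epsilon)\bigr)$. Removing the residual $\mu_{\min}$-dependence to reach $O(k\log(k/\epsilon))$ is precisely the contribution of \cite{ALOV20} over \cite{CGM19}, and it requires an additional idea that your proposal does not supply.
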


One of the important steps yielding the above result is a discrete-time variant of the Modified Log-Sobolev Inequality (MLSI) proved by \textcite{CGM19}. We will use the main ingredient from the proof of the MLSI inequality, that the down operator shrinks KL-divergence:
\begin{theorem}[\cite{CGM19}]\label{thm:MLSI}
	If $\mu:\binom{[n]}{k}\to \R_{\geq 0}$ is a distribution with a log-concave generating polynomial, and $\nu:\binom{[n]}{k}\to \R_{\geq 0}$ is an arbitrary distribution, then
	\[ \D{\nu D_{k\to k-1}\mid \mu D_{k\to k-1}}\leq \frac{k-1}{k}\D{\nu \mid \mu}. \]
\end{theorem}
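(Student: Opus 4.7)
The plan is to reduce the contraction to a local-to-global entropy inequality via the KL chain rule, and then to establish that inequality inductively using the inheritance of log-concavity through links. Couple $S\sim\nu$ with $e\in S$ chosen uniformly and let $T=S\setminus\{e\}$. The marginal of $T$ in this coupling is exactly $\nu D_{k\to k-1}$, and the conditional law of $S$ given $T$ is $\nu_T$, the restriction of $\nu$ to supersets of $T$, renormalized. The chain rule for KL divergence then gives
\[
\D{\nu \mid \mu} = \D{\nu D_{k\to k-1} \mid \mu D_{k\to k-1}} + \E_{T\sim \nu D_{k\to k-1}}\bigl[\D{\nu_T \mid \mu_T}\bigr],
\]
so the theorem is equivalent to the lower bound
\[
\E_{T\sim \nu D_{k\to k-1}}\bigl[\D{\nu_T \mid \mu_T}\bigr] \;\geq\; \tfrac{1}{k}\,\D{\nu \mid \mu}.
\]

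For the lower bound I would induct on $k$. For each $e\in[n]$ the link $\mu^e$ on $\binom{[n]-\{e\}}{k-1}$ defined by $\mu^e(S)\propto\mu(S\cup\{e\})$ has generating polynomial $\partial_{z_e}g_\mu$ with $z_e$ set to $0$, which remains homogeneous, multiaffine, and log-concave by \cref{lem:clc}, so the inductive hypothesis transfers to every link. A companion application of the chain rule, decomposing $\nu$ by first revealing the uniformly chosen element $e\in S$ and then the remaining $(k-1)$-subset $S\setminus\{e\}$, yields the dual identity
\[
\D{\nu \mid \mu} = \D{\pi_\nu \mid \pi_\mu} + \E_{e\sim\pi_\nu}\bigl[\D{\nu^e \mid \mu^e}\bigr],
\]
where $\pi_\nu(e)=\tfrac{1}{k}\PrX{S\sim\nu}{e\in S}$ and $\pi_\mu$ is defined analogously. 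Substituting the inductive hypothesis inside each link $\mu^e$ and identifying $\nu^e_{T'}=\nu_{T'\cup\{e\}}$ in the joint law of $(e,T')$ (whose $T=T'\cup\{e\}$-marginal is precisely $\nu D_{k\to k-1}$) gives
\[
\E_{T\sim\nu D_{k\to k-1}}\bigl[\D{\nu_T \mid \mu_T}\bigr] \;\geq\; \tfrac{1}{k-1}\bigl(\D{\nu \mid \mu}-\D{\pi_\nu \mid \pi_\mu}\bigr),
\]
so the target reduces to the bottom-level marginal contraction $\D{\pi_\nu \mid \pi_\mu}\leq \tfrac{1}{k}\D{\nu \mid \mu}$.

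The main obstacle is this bottom-level inequality, which does not follow from the chain rule alone and requires the full structural content of log-concavity. I would attack it through the $2$-dimensional links of $\mu$: for each $R$ of size $k-2$, the link polynomial $g_{\mu^R}$ is a homogeneous multiaffine degree-$2$ polynomial $\sum_{i<j} a_{ij}z_iz_j$ whose coefficient matrix, by log-concavity of $g_\mu$ and \cref{lem:clc}, has at most one positive eigenvalue on $\R_{\geq 0}^n$. This yields a uniform spectral gap for the singleton-pair random walk associated with every $\mu^R$, and upgrading this variance contraction to an entropy (KL) contraction is the technically delicate step; I would approach it by a direct estimate on $\log(\nu/\mu)$ exploiting concavity of $\log g_\mu$ along coordinate rays, or equivalently by invoking an Alev--Lau-style local-to-global theorem that promotes local spectral expansion into a global entropy bound. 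Once this marginal contraction is in hand, the two chain rules close the induction and produce precisely the factor $(k-1)/k$.
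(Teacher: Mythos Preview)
First, note that the paper does not itself prove \cref{thm:MLSI}; it is quoted from \cite{CGM19} as an external input and then used, by iteration, to derive the $k\to l$ contraction in \cref{lem:kl-shrink}. So there is no in-paper proof to compare against.

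Your chain-rule reduction is correct: the target $A_k$ (the $D_{k\to k-1}$ entropy contraction at degree $k$) follows from the inductive hypothesis $A_{k-1}$ together with the ``bottom-level'' inequality $B_k:\ \D{\pi_\nu\mid\pi_\mu}\le \tfrac1k\D{\nu\mid\mu}$. The problem is that this reduction buys nothing. By precisely the iteration the paper uses in \cref{lem:kl-shrink}, the statements $A_2,\dots,A_k$ together imply $B_k$; combined with your implication $A_{k-1}+B_k\Rightarrow A_k$, this shows that $B_k$ and $A_k$ are \emph{equivalent} given the inductive hypothesis. You have pushed the entire content of the theorem into $B_k$, which you then do not prove.

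Your proposed attack on $B_k$ --- pass to the $2$-dimensional links, extract the one-positive-eigenvalue property, and then ``upgrade'' the resulting variance contraction to an entropy contraction --- is the gap itself, not a route around it. A local spectral gap does not in general yield entropy contraction with the same constant; the Alev--Lau style local-to-global machinery you allude to produces spectral, not entropic, conclusions, and obtaining the sharp factor $1/k$ for entropy is exactly the contribution of \cite{CGM19}. As written, your proposal correctly isolates the hard step but does not carry it out: the chain-rule and link-induction scaffolding are sound, but without an independent argument establishing $B_k$ directly from log-concavity, the induction never closes.
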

	\section{Instant Mixing via Isotropy}
\label{sec:instant}

Let $\mu:\binom{[n]}{k}\to \R_{\geq 0}$ be a
density function on sets of size $k$. Here we describe a general random-walk-based algorithm for approximately sampling a set $S$ with $\Pr{S}\propto \mu(S)$, and show that isotropy of $\mu$ guarantees almost-instantaneous mixing. However, implementing each step of this algorithm is nontrivial, and later in \cref{sec:hierarchical} we show how to implement the steps using an \emph{inner random walk}, proving \cref{thm:main}.

Without loss of generality, assume that
every element of $[n]$ is drawn by $\mu$ with positive probability,
i.e., that $\PrX{\Sc\sim \mu}{i\in \Sc}>0$ for all $i\in[n]$. Moreover,
let $p:[n]\to \R_{\geq0}$ be a probability distribution over
$[n]$ such that $p(i)>0$ for all $i$. We assume there is an oracle $\O$ that can produce i.i.d.\ samples from $p$. Consider the following Markov
chain $\Mc_{\mu,p}^t$ defined for any positive integer $t$, with the state space
$\supp(\mu)$. Starting from $S_0\in\supp(\mu)$, one
step of the chain is given by:
\begin{enumerate}
\item Draw a sequence
  $\rho=\langle \rho_1,...,\rho_t\rangle$ of i.i.d.~samples from $p$ by calling $\O$.
\item Arrange $S_0$ together with $\rho$, permuting the elements uniformly at random to
  obtain a sequence $\sigma=\langle \sigma_1,...,\sigma_{t+k}\rangle$.
\item Return $\Sc_1=\sigma_{\Sc}$, where $\Sc$ is drawn from a
distribution
  $\mu_{\sigma,p}:\binom{[t+k]}{k}\to \R_{\geq0}$ defined by:
  \begin{align*}
    \mu_{\sigma,p}(S) = \frac{\mu(\sigma_S)\prod_{i\in
    S}p(\sigma_i)^{-1}}{Z_{\mu,p}(\sigma)}\quad\text{where}\quad
    \sigma_S=\set{\sigma_i\given i\in S}.
  \end{align*}
\end{enumerate}
Here, $Z_{\mu,p}(\sigma)$ denotes the normalization constant of
$\mu_{\sigma,p}$. We first establish some basic properties of 
$\Mc_{\mu,p}^t$. In particular, we show that, regardless of the choice of the
sampling distribution $p$, the stationary distribution of $\Mc_{\mu,p}^t$ is $\mu$.
\begin{lemma}\label{lem:convergence}
If $\mu$ has a log-concave generating polynomial then for any $t\geq
1$ the chain $\Mc_{\mu,p}^t$ is irreducible and aperiodic with
  stationary distribution $\mu$.
\end{lemma}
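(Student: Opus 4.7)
The plan is to verify the three required properties separately. Stationarity will carry the main content, obtained by a joint-distribution argument on an enlarged state space; aperiodicity and irreducibility will follow from elementary observations plus the structure of $\supp(\mu)$ implied by log-concavity.

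For stationarity, I would work on the enlarged space by tracking not only the sequence $\sigma$ but also the random index set $\Sc_0\in\binom{[t+k]}{k}$ recording the positions originally occupied by the elements of $S_0$. Because the $t+k$ items are arranged by a uniformly random permutation, $\Sc_0$ is uniform on $\binom{[t+k]}{k}$; conditioned on $\Sc_0$, the positions in $\Sc_0$ hold a uniformly random ordering of $S_0\sim\mu$, while the remaining positions hold the coordinates of $\rho$, whose i.i.d.\ $p$-law is preserved under the permutation. Writing out the joint density and rearranging should yield
\[ \Pr{\sigma,\Sc_0}\propto \mu(\sigma_{\Sc_0})\prod_{j\notin\Sc_0}p(\sigma_j)=\mu_{\sigma,p}(\Sc_0)\cdot Z_{\mu,p}(\sigma)\prod_{j=1}^{t+k}p(\sigma_j), \]
so that conditioned on $\sigma$, the index $\Sc_0$ is distributed exactly as $\mu_{\sigma,p}$ — precisely the distribution from which step 3 of the algorithm draws $\Sc_1$. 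Thus $\Sc_0$ and $\Sc_1$ are exchangeable given $\sigma$, and marginalizing out $\sigma$ shows that $S_1=\sigma_{\Sc_1}$ has the same law as $S_0=\sigma_{\Sc_0}$, namely $\mu$.

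For aperiodicity, I would observe that $\mu_{\sigma,p}$ assigns strictly positive mass to $\Sc_1=\Sc_0$ (since $\mu(S_0)>0$ and every $p(\sigma_i)$ is positive), producing a self-loop at every state in $\supp(\mu)$. Irreducibility is where log-concavity enters: the support of a log-concave generating polynomial is an M-convex set, so for any $S,T\in\supp(\mu)$ and any $e\in S\setminus T$ there exists $f\in T\setminus S$ with $(S\setminus\set{e})\cup\set{f}\in\supp(\mu)$, making the exchange graph on $\supp(\mu)$ connected. It therefore suffices to realize each single swap $S_0\to(S_0\setminus\set{e})\cup\set{f}$ with positive probability in one step of $\Mc_{\mu,p}^t$. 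This holds because $p(f)>0$ makes $\rho$ contain $f$ with positive probability, and once $f$ is present the $\mu_{\sigma,p}$-weight of the index set producing $(S_0\setminus\set{e})\cup\set{f}$ is strictly positive by the exchange property. Iterating swap moves connects any pair of states.

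The main obstacle I anticipate is the bookkeeping for stationarity: sets (for $S_0$), tuples (for $\rho$), and a uniform permutation of $t+k$ items are all entangled in one step of the chain, and a direct $\sum_{S_0}\mu(S_0)\Pr{S_1=T\given S_0}=\mu(T)$ calculation is unpleasant. The trick is to lift to the joint variable $(\sigma,\Sc_0)$, at which point the calculation collapses into the symmetry between $\Sc_0$ and $\Sc_1$ described above and the rest of the lemma becomes almost immediate.
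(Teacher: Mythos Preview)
Your proposal is correct, and for aperiodicity and irreducibility it matches the paper's argument essentially verbatim (self-loop at every state; matroid/M-convex support gives basis-exchange connectivity, and each single swap has positive transition probability because $p(f)>0$).

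For stationarity, both you and the paper start from $\Sc_0\sim\mu$ and verify $\Sc_1\sim\mu$, but the organization differs. The paper first computes the marginal law of the intermediate sequence, obtaining $\Pr{\sigma=\tau}=p(\tau)Z_{\mu,p}(\tau)\big/\big(k!\binom{t+k}{k}\big)$, and then evaluates $\Pr{\Sc_1=S_1}$ by a direct double sum over $(\tau,S)$ in which the normalizers cancel and the remaining sum telescopes to $\mu(S_1)$. You instead lift to the joint $(\sigma,\Sc_0)$ and observe that the conditional law of the \emph{old} index set $\Sc_0$ given $\sigma$ is exactly $\mu_{\sigma,p}$, the same law from which the \emph{new} index set is drawn in step 3; stationarity then drops out as conditional exchangeability. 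Your route is a cleaner packaging of the same algebra and avoids the explicit telescoping; the paper's direct calculation has the side benefit that the formula for $\Pr{\sigma=\tau}$ is recorded along the way and reused in the coupling proof of \cref{lem:coupling}.
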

\begin{proof}
Let $P$ denote the transition probability matrix of
$\Mc_{\mu,p}^t$. The aperiodicity follows because for any
$S\in\supp(\mu)$, we have $P(S,S)>0$. To establish irreducibility,
note that since $t\geq 1$, for any $S,S'\in\supp(\mu)$ that differ
only by swapping a pair of elements (i.e., $|S\cap S'|=k-1$), we have
$P(S,S')>0$. Since $\mu$ has a log-concave generating polynomial,
$\supp(\mu)$ is a matroid so there is a sequence of such
swaps within $\supp(\mu)$ that converts any $S\in\supp(\mu)$ into any $S'\in\supp(\mu)$.

It remains to find the stationary
distribution. To that end, suppose that we perform one step of the
chain starting from $\Sc_0\sim\mu$. We first derive the distribution
of the intermediate sequence $\sigma$, using $r=t+k$ and
$p(\tau)=\prod_{i=1}^rp(\tau_i)$ for $\tau\in[n]^r$ as shorthands:
\begin{align}
  \Prob[\sigma = \tau] = \frac1{k!{r\choose k}}\sum_{S\in{[r]\choose
  k}}\mu(\tau_S)\prod_{i\in [r]\backslash S}p(\tau_i) =
  \frac{p(\tau)}{k!{r\choose k}}\sum_{S\in{[r]\choose
  k}}\mu(\tau_S)\prod_{i\in S}p(\tau_i)^{-1} =
  \frac{p(\tau) Z_{\mu,p}(\tau)}{k!{r\choose k}},\label{eq:intermediate}
\end{align}
where the summation enumerates different placements of $\Sc_0$ in the
sequence $\sigma$ resulting from  uniformly permuting the
sequence. Finally, we derive the distribution of $\Sc_1$ as follows:
\begin{align*}
  \Prob[\Sc_1=S_1]
  &=
    \sum_{\tau\in[n]^r}\Prob[\sigma_{\Sc}=S_1\mid\sigma=\tau]\Prob[\sigma=\tau]
  =\sum_{\tau\in[n]^r}\sum_{S\in{[r]\choose k}}
    \1_{[\tau_S=S_1]}\mu_{\tau,p}(S)\Prob[\sigma=\tau]\\
  &=\sum_{S\in{[r]\choose
    k}}\sum_{\tau\in[n]^r}\1_{[\tau_S=S_1]}
    \frac{\mu(\tau_S)\prod_{i\in
    S}p(\sigma_i)^{-1}}{Z_{\mu,p}(\tau)}\,
    \frac{Z_{\mu,p}(\tau)p(\tau)}{k!{r\choose k}}\\
  &=k!{r\choose
    k}\sum_{\tilde\tau\in[n]^t}\mu(S_1)\frac{p(\tilde\tau)}{k!{r\choose k}}
    \ =\ \mu(S_1)\!\!\sum_{\tilde\tau\in[n]^t}p(\tilde\tau)\ =\ \mu(S_1),
\end{align*}
which means that $\mu$ is the stationary distribution of
$\Mc_{\mu,p}^t$, concluding the proof.
\end{proof}

Since $\Mc_{\mu,p}^t$ is irreducible and aperiodic, it converges to
its stationary distribution from any starting state. We next establish
the rate of this convergence, under some additional assumptions. 

\subsection{Negative Dependence Property}
To further analyze the Markov chain $\Mc_{\mu,p}^t$, we establish a new
property of distributions with a
log-concave generating polynomial. This property, which may be of
independent interest, allows us to bound the normalization 
constant $Z_{\mu,p}(\sigma)$, as shown in the
following lemma (the proof is deferred to \cref{sec:negdep}).
\begin{lemma}\label{lem:condition}
Let $\mu:\binom{[n]}{k}\to \R_{\geq 0}$ be a distribution with a
log-concave generating polynomial. Then, for any function $p:[n]\rightarrow \R_{>0}$,
integer $t\geq k$ and  sequence $\tau\in[n]^t$, we have
\begin{align*}
\sum_{S\subseteq {[t]\choose k}}\mu(\tau_S)\prod_{i\in
S}p(\tau_i)^{-1}
  \leq \bigg(\sum_{i=1}^t
  \frac{\Prob_{\Sc\sim\mu}[\tau_i\in\Sc]}{k\cdot p(\tau_i)}\bigg)^k.
\end{align*}
\end{lemma}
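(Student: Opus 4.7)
The plan is to recognize the left-hand side as an evaluation of the generating polynomial $g_\mu$ at a specific nonnegative point, and then to apply a single concavity inequality coming from the fact that $g_\mu^{1/k}$ is concave on $\R_{\geq 0}^n$.

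First I would rewrite the left-hand side as a value of $g_\mu$. Set
\[
    w_j := \frac{\card{\tau^{-1}(j)}}{p(j)}\quad\text{for each }j\in[n],
\]
where $\tau^{-1}(j)=\set{i\in[t]\given \tau_i=j}$. A term of the sum on the left is nonzero only when $\card{\tau_S}=k$, and grouping such $S$'s by the underlying set $T=\tau_S\in\binom{[n]}{k}$, each $T$ arises from $\prod_{j\in T}\card{\tau^{-1}(j)}$ choices of $S$, while $\prod_{i\in S}p(\tau_i)^{-1}=\prod_{j\in T}p(j)^{-1}$. Hence
\[
    \sum_{S\in\binom{[t]}{k}}\mu(\tau_S)\prod_{i\in S}p(\tau_i)^{-1}
    \;=\;\sum_{T\in\binom{[n]}{k}}\mu(T)\prod_{j\in T}w_j
    \;=\;g_\mu(w).
\]

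Next I would invoke a concavity inequality for $g_\mu^{1/k}$. Since $g_\mu$ is homogeneous of degree $k$, nonnegative, and log-concave on $\R_{\geq 0}^n$, a standard argument (normalize $u,v\geq 0$ so that $g_\mu(u)=g_\mu(v)=1$, apply log-concavity at the correct convex combination, then rescale using homogeneity) shows that $f:=g_\mu^{1/k}$ is concave and homogeneous of degree $1$ on $\R_{\geq 0}^n$. Because $g_\mu(\mathbf{1})=1$, differentiability at $c=\mathbf{1}$ holds, and concavity combined with Euler's identity $f(c)=\nabla f(c)\cdot c$ yields the global bound
\[
    f(w)\;\leq\; f(c)+\nabla f(c)\cdot(w-c)\;=\;\nabla f(c)\cdot w
    \qquad\text{for all }w\in\R_{\geq 0}^n.
\]
At $c=\mathbf{1}$ we have $\partial_j f(\mathbf{1})=\tfrac{1}{k}g_\mu(\mathbf{1})^{1/k-1}\partial_j g_\mu(\mathbf{1})=\tfrac{1}{k}\PrX{\Sc\sim\mu}{j\in\Sc}$, so raising the inequality $f(w)\leq \nabla f(\mathbf{1})\cdot w$ to the $k$-th power gives
\[
    g_\mu(w)\;\leq\;\Bigl(\sum_{j=1}^n \frac{\PrX{\Sc\sim\mu}{j\in\Sc}}{k}\cdot w_j\Bigr)^{k}.
\]

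Finally I would translate the right-hand side back in terms of $\tau$. Substituting $w_j=\card{\tau^{-1}(j)}/p(j)$ and collapsing the sum over $j\in[n]$ into a sum over $i\in[t]$ via $\sum_j\card{\tau^{-1}(j)}h(j)=\sum_i h(\tau_i)$, the bound becomes exactly $\bigl(\sum_{i=1}^t \PrX{\Sc\sim\mu}{\tau_i\in\Sc}/(k\,p(\tau_i))\bigr)^k$, which closes the argument.

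The main obstacle is the concavity lemma $g_\mu^{1/k}$ concave. Although this is a folklore consequence of log-concavity plus homogeneity, some care is needed at the boundary of $\R_{\geq 0}^n$ and to justify differentiability at $c=\mathbf{1}$; I expect to dispatch these points by a short continuity/interior argument, since $g_\mu(\mathbf{1})=1>0$. The combinatorial identification $\text{LHS}=g_\mu(w)$ is routine but needs the observation that repetitions in $\tau$ force $\mu(\tau_S)=0$, so only $S$ with distinct $\tau$-images contribute.
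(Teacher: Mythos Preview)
Your argument is correct and takes a genuinely different route from the paper. The paper first proves the special case $p\equiv 1$ with $\tau$ a set $T$ (its \cref{lem:neg2}, the bound $\PrX{S\sim\mu}{S\subseteq T}\le(\tfrac{1}{k}\sum_{i\in T}\PrX{S\sim\mu}{i\in S})^k$) by applying the KL-contraction $\D{\nu D_{k\to 1}\mid \mu D_{k\to 1}}\le \tfrac{1}{k}\D{\nu\mid\mu}$ from \textcite{CGM19}, and then lifts to general $p$ via a subdivision trick (integer $p$ first, then rational by homogeneity, then real by continuity). You instead bypass both steps: the identification of the left-hand side with $g_\mu(w)$ for $w_j=\card{\tau^{-1}(j)}/p(j)$ absorbs the repetitions and the weights $p$ in one stroke, and the single tangent-plane inequality for the concave, degree-$1$ homogeneous function $g_\mu^{1/k}$ at $\mathbf{1}$ yields the result directly. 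Your proof is strictly more elementary, using only log-concavity plus homogeneity rather than the MLSI machinery; it also makes the hypothesis $t\ge k$ visibly unnecessary. What the paper's longer route buys is a suite of intermediate negative-dependence statements (\cref{prop:neg1}, \cref{lem:neg2}, and the approximate negative-correlation lemma generalizing \textcite{HSW18}) that it wants for their own sake, together with an explicit appearance of the subdivision construction that is reused conceptually elsewhere in the paper; your approach would recover \cref{lem:neg2} as the special case $w=\1_T$ but does not naturally pass through the pointwise bound $\mu(T)\le\prod_{i\in T}\PrX{S\sim\mu}{i\in S}$.
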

 Note that when $p$ is the sampling distribution from
$\Mc_{\mu,p}^t$, then the left-hand side of the inequality is 
exactly the normalization constant $Z_{\mu,p}(\tau)$. To give some
intuition behind this bound, suppose that $p(i)=1$ for all
$i\in[n]$, and let $\tau$ be a sequence of $t$ unique elements from
some set $T$. Then, the left-hand side can be concisely written as
$\PrX{\Sc\sim \mu}{\Sc\subseteq T}$. Furthermore, let $s_1,...,s_k$ be a random permutation of
the elements of $\Sc$ (i.e., $s_i$ are identically distributed
according to the marginal distribution of $\mu$, but
they may not be independent). Then, the inequality can be 
stated as a new negative correlation property that may be of independent
interest: 
\begin{align}
\Prob\bigg[\bigwedge_{i=1}^k \,s_i\in T \bigg]\leq
\bigg(\frac1k\sum_{i\in
  T}\Prob[i\in\Sc]\bigg)^k\!=  \prod_{i=1}^k\Prob[s_i\in T], \quad \text{for all $T$ of size $\geq k$}.\label{eq:neg-dep}
\end{align}
When the size of $T$ is $k$, then \cref{eq:neg-dep} is implied by an even
stronger property of log-concave 
distributions:
\begin{align*}
\Prob\bigg[\bigwedge_{i=1}^k \,s_i\in T \bigg]=  \Prob[\Sc=T]\overset{(a)}{\leq} \prod_{i\in
  T}\Prob[i\in\Sc]\overset{(b)}{\leq} \bigg(\frac1k\sum_{i\in
  T}\Prob[i\in\Sc]\bigg)^k,
\end{align*}
where $(a)$ will be proved in \cref{sec:negdep} and $(b)$ is the
arithmetic-geometric mean inequality. For $t>k$,
\cref{eq:neg-dep} no longer follows from $(a)$. For
example, suppose that $\Prob[i\in S]=\frac kn$ for all $i$. Then, for a
set $T$ of size $t>k$, inequality $(a)$ implies that
$\Prob[\Sc\subseteq T]\leq {t\choose k} (\frac kn)^k$, whereas
\cref{eq:neg-dep} states that $\Prob[\Sc\subseteq T]\leq (\frac tn)^k$,
which is tighter by a factor that can be as large as $2^k$.

\subsection{Coupling Argument}
To establish the convergence rate of $\Mc_{\mu,p}^t$, we couple a single
step of this Markov chain with a single step of the chain
$\Mc_{\mu,p}^{t-k}$. Crucially, we allow the former to start from a
fixed arbitrary state $S_0\in\supp(\mu)$, but we assume that the latter is already
at the stationary distribution, with starting state
$\Sc_0'\sim\mu$. The overall coupling procedure is illustrated
in \cref{fig:coupling}.

We next bound the probability that the output
states of the two chains are different, i.e., $\Prob[\Sc_1\neq \Sc_1']$,
which implies a bound on the total variation distance of $\Mc_{\mu,p}^t$ from
$\mu$ after one step. To achieve this, in addition to $\mu$ having a
log-concave polynomial, we must also assume that the importance
sampling distribution $p$ is a sufficiently good approximation of the
marginal distribution of $\mu$, i.e., that $p(i)\approx
\frac1k\Prob_{\Sc\sim\mu}[i\in \Sc]$. This ensures the tightest form of the
inequality in \cref{lem:condition}. It also makes intuitive sense,
since if an element $i$ has high marginal probability, then we need to
ensure that the chain is likely to find it in its i.i.d.~sampling
phase. The precise formulation of the result is given in the following lemma.
\begin{lemma}\label{lem:coupling}
Let $\mu$ be a $k$-homogeneous distribution with a
log-concave polynomial. For any $c>1$, if $t\geq ck^2$ and
$k\cdot p(i)\geq (1+\frac1{ck})^{-1}\Prob_{S\sim\mu}[i\in S]$ for all $i\in[n]$, then the
transition matrix $P$ of $\Mc_{\mu,p}^t$ satisfies for any starting state $S_0$:
\begin{align*}
  \norm*{P(S_0,\cdot) - \mu}_\tv\leq 5/c.
\end{align*}
\end{lemma}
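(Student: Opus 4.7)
The plan is to couple one step of $\Mc_{\mu,p}^t$ started at the arbitrary $S_0$ with one step of $\Mc_{\mu,p}^{t-k}$ started at $\Sc_0'\sim\mu$. By \cref{lem:convergence}, the second chain is already at stationarity, so its output $\Sc_1'$ has law $\mu$, and therefore $\norm{P(S_0,\cdot)-\mu}_\tv\leq \Pr{\Sc_1\neq\Sc_1'}$. I will construct the coupling by sharing the i.i.d.\ $p$-randomness common to the two chains: let $\rho=(\rho_1,\dots,\rho_{t-k})$ be i.i.d.\ from $p$ and use it in both. Chain~2 places $\Sc_0'$ at a uniformly random position set $H_2\in\binom{[t]}{k}$ in its length-$t$ sequence $\sigma_2$ and fills the remaining $t-k$ positions with $\rho$. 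Chain~1 draws $k$ additional i.i.d.\ samples $\xi=(\xi_1,\dots,\xi_k)$ from $p$, and forms its length-$(t+k)$ sequence $\sigma_1$ by placing $S_0$ at a uniformly random $H_1\in\binom{[t+k]}{k}$ with the other $t$ positions filled by a uniform ordering of $\rho\cup\xi$; then $S_1\sim\mu_{\sigma_1,p}$ and $\Sc_1=(\sigma_1)_{S_1}$.

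The crucial event is $E=\{S_1\cap H_1=\emptyset\}$: chain~1's sampled $k$-subset avoids the positions where $S_0$ was inserted. On $E$, the output $\Sc_1$ depends only on the i.i.d.\ portion $\sigma_1^{\mathrm{iid}}:=(\sigma_1)_{[t+k]\setminus H_1}$, and the conditional law of $S_1$ reduces to $\mu_{\sigma_1^{\mathrm{iid}},p}$ on this length-$t$ i.i.d.-$p$ sequence. I then plan to match the random ordering of $\xi$ with $\Sc_0'$ (and align $H_1$ with $H_2$ on the shared positions) so as to force $\Sc_1=\Sc_1'$ whenever $E$ holds.

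The technical core is the bound $\Pr{E^c}=O(1/c)$. Conditional on $\sigma_1$, $\Pr{E^c\given\sigma_1}=1-Z_{\mu,p}(\sigma_1^{\mathrm{iid}})/Z_{\mu,p}(\sigma_1)$. Applying \cref{lem:condition} to $\sigma_1$ and using the marginal hypothesis $k\,p(i)\geq(1+1/(ck))^{-1}\PrX{\Sc\sim\mu}{i\in\Sc}$ gives $Z_{\mu,p}(\sigma_1)\leq\bigl((t+k)(1+1/(ck))\bigr)^k$. For the numerator, a direct computation using that $\sigma_1^{\mathrm{iid}}$ is $t$ i.i.d.\ samples from $p$ yields $\Ex{Z_{\mu,p}(\sigma_1^{\mathrm{iid}})}=t(t-1)\cdots(t-k+1)$. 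Under $t\geq ck^2$, both $(t+k)^k$ and $t(t-1)\cdots(t-k+1)$ lie within a factor $1+O(1/c)$ of $t^k$ while $(1+1/(ck))^k\leq e^{1/c}$, so the ratio of normalization constants is $1-O(1/c)$ in expectation, which I will promote to the desired tail bound.

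The main obstacle is making the coupling on $E$ fully rigorous: although $\sigma_1^{\mathrm{iid}}$ and $\sigma_2$ have the same length $t$, their marginals differ ($\sigma_1^{\mathrm{iid}}$ is all i.i.d.\ $p$ whereas $\sigma_2$ has $k$ positions hosting $\Sc_0'\sim\mu$), so one has to align $\xi$ with $\Sc_0'$ and couple the two samplings $S_1$ and $S_2$ so that $\Sc_1=\Sc_1'$ on $E$ while preserving both marginals. A secondary difficulty is turning the expectation bound on $Z_{\mu,p}(\sigma_1^{\mathrm{iid}})$ into a bound strong enough to control $\Pr{E^c\given\sigma_1}$; I expect to use \cref{lem:chernoff} for concentration, with careful tracking of the $1/c$ losses yielding the stated constant $5/c$.
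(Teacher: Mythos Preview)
Your overall coupling scheme (chain $\Mc_{\mu,p}^t$ from $S_0$ versus $\Mc_{\mu,p}^{t-k}$ from $\Sc_0'\sim\mu$) matches the paper's, and your computation for $\Pr{E^c}$ is essentially right. But the part you flag as ``the main obstacle'' is a genuine gap, and your proposed fix does not work.

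You share a common length-$(t-k)$ i.i.d.\ block $\rho$ and then try to ``match $\xi$ with $\Sc_0'$''. This cannot succeed as stated: $\xi$ is $k$ i.i.d.\ samples from $p$, whereas $\Sc_0'\sim\mu$; their laws are different, so you cannot couple them to coincide with high probability. Consequently, on your event $E$ the length-$t$ sequence $\sigma_1^{\mathrm{iid}}$ (all i.i.d.\ $p$) and chain~2's sequence $\sigma_2$ (i.i.d.\ $p$ on $[t]\setminus H_2$, $\Sc_0'$ on $H_2$) disagree on the $H_2$ slots, and there is no way to force $(\sigma_1^{\mathrm{iid}})_{S_1}=(\sigma_2)_{S_2}$. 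There is also a subtler issue: conditioning on $E$ biases the law of $\sigma_1^{\mathrm{iid}}$ (since $\Pr{E\mid\sigma_1}=Z_{\mu,p}(\sigma_1^{\mathrm{iid}})/Z_{\mu,p}(\sigma_1)$ depends on $\sigma_1^{\mathrm{iid}}$), so even the premise ``on $E$ the conditional law reduces to $\mu_{\tau,p}$ with $\tau$ i.i.d.\ $p$'' is not correct marginally.

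The paper's resolution is to \emph{not} share $\rho$ at all. Instead it couples chain~1's entire length-$t$ i.i.d.\ sequence $\rho=(\rho_1,\dots,\rho_t)$ directly with chain~2's \emph{full permuted} sequence $\sigma'=(\sigma_1',\dots,\sigma_t')$ (i.e., $\Sc_0'$ already mixed in). The key is the explicit density from the stationarity computation, $\Pr{\sigma'=\tau}=p(\tau)\,Z_{\mu,p}(\tau)/\bigl(k!\binom{t}{k}\bigr)$. Combining this with \cref{lem:condition}, which gives $Z_{\mu,p}(\tau)\le ((1+\epsilon)t)^k$ for $\epsilon=1/(ck)$, yields $\Pr{\sigma'=\tau}\le\bigl(\tfrac{(1+\epsilon)t}{t-k}\bigr)^k\Pr{\rho=\tau}$, so $\Pr{\rho=\sigma'}\ge 1-2/c$. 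Once $\rho=\sigma'$, your event $E$ becomes ``$\Sc\subseteq[t]$'' for $\sigma=\langle\rho,s_1,\dots,s_k\rangle$, and on it $\Sc\mid\{\Sc\subseteq[t]\}$ and $\Sc'$ are both exactly $\mu_{\sigma',p}$, so they couple perfectly. The union bound of the two failure events gives $5/c$.

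On your ``secondary difficulty'': no concentration is needed. Since the denominator has the \emph{deterministic} bound $Z_{\mu,p}(\sigma_1)\le\bigl((t+k)(1+\epsilon)\bigr)^k$, you get directly
\[
\Pr{E}=\Ex*{\frac{Z_{\mu,p}(\sigma_1^{\mathrm{iid}})}{Z_{\mu,p}(\sigma_1)}}\ \ge\ \frac{\Ex{Z_{\mu,p}(\sigma_1^{\mathrm{iid}})}}{\bigl((t+k)(1+\epsilon)\bigr)^k}\ =\ \frac{k!\binom{t}{k}}{\bigl((t+k)(1+\epsilon)\bigr)^k}\ \ge\ 1-\frac{3}{c},
\]
using exactly your computation $\Ex{Z_{\mu,p}(\sigma_1^{\mathrm{iid}})}=t(t-1)\cdots(t-k+1)$. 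Chernoff is unnecessary.
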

\begin{figure}
\centering
  \begin{tikzpicture}
    \draw (0,0) node (A1) {$S_0$};                              
    \draw (2,0) node (A2){$\Mc_{\mu,p}^t$};                              
    \draw (5,0) node (B1)  {$\Sc_0'$};                              
    \draw (7,0) node (B2) {$\Mc_{\mu,p}^{t-k}$};
    \draw (2,-1) node (A3) {$\langle \rho_1,...,\rho_t\rangle$};
    \draw (7,-1) node (B3) {$\langle \rho_1',...,\rho_{t-k}'\rangle$};
    \draw (2,-2) node (A4){$\langle \sigma_1,...,\sigma_t,...,,\sigma_{t+k}\rangle$};
    \draw (7,-2) node (B4) {$\langle
      \sigma_1',...,\sigma_{t-k}',...,\sigma_t'\rangle$};
    \draw (2,-3) node (A5) {$\Sc_1$};                              
    \draw (7,-3) node (B5) {$\Sc_1'$};
    \draw [->] (A1) -- (A2);
    \draw [->] (B1) -- (B2);
    \draw [->] (A2) -- (A3);
    \draw [->] (B2) -- (B3);
    \draw [->] (A3) -- (A4);
    \draw [->] (B3) -- (B4);
    \draw [->] (A4) -- (A5);
    \draw [->] (B4) -- (B5);
    \path[very thick,<->]  (A3.east) edge node[sloped, anchor=center,
    above] {\footnotesize coupled} (B4.west);
    \path[very thick,<->]  (A5) edge node[sloped, anchor=center,
    above] {\footnotesize coupled} (B5);
  \end{tikzpicture}
  \caption{The coupling between $\Mc_{\mu,p}^t$
    started at fixed $S_0\in\supp(\mu)$ and $\Mc_{\mu,p}^{t-k}$
    started at $\Sc_0'\sim\mu$.}\label{fig:coupling}
\end{figure}
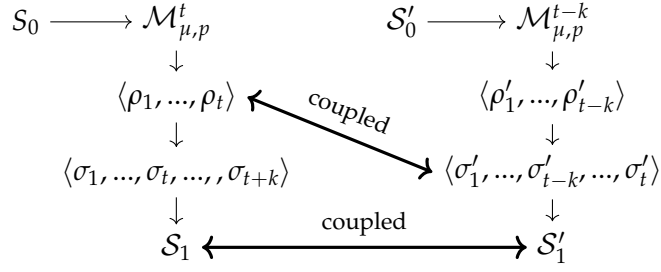
\begin{proof}
As discussed above, it suffices to show that the chain $\Mc_{\mu,p}^t$
started at $S_0$ can be coupled with chain $\Mc_{\mu,p}^{t-k}$ started
at $\Sc_0'\sim\mu$, so that $\Prob[\Sc_1=\Sc_1']\geq 1-\frac 5c$. First,
letting $\epsilon=\frac1{ck}$, observe that by combining
\cref{lem:condition} with the approximation guarantee $k\cdot
p(i)\geq (1+\epsilon)^{-1}\Prob_{\Sc\sim\mu}[i\in \Sc]$, we can upper bound
the normalization constant of the distribution $\mu_{\tau,p}$ for
any $\tau\in[n]^t$ as follows:
\begin{align*}
  Z_{\mu,p}(\tau) \leq
  \bigg(\sum_{i=1}^t\frac{\Prob_{\Sc\sim\mu}[\tau_i\in \Sc]}{k\cdot
  p(\tau_i)}\bigg)^k\leq (t+\epsilon t)^k.
\end{align*}
We use this bound first to couple the intermediate sequences generated
by the two chains, namely $\rho=\langle\rho_1,...,\rho_t\rangle$ and
$\sigma'=\langle\sigma_1',...,\sigma_t'\rangle$ (see \cref{fig:coupling}). Note that the lengths
of the sequences match because the former does not have the starting
state $S_0$ permuted into it. Using the derivation in
\cref{eq:intermediate}, we bound the probability distribution of
$\sigma'$ in terms of the distribution of $\rho$:
\begin{align*}
  \Prob[\sigma'\!=\tau] = \frac{p(\tau)Z_{\mu,p}(\tau)}{k!{t\choose
  k}}\leq p(\tau)\frac{(t+\epsilon t)^k}{k!{t\choose
  k}}=\Prob[\rho=\tau]\prod_{i=0}^{k-1}\frac{t+\epsilon t}{t-i}\leq
  \Prob[\rho=\tau]\bigg(\frac{t+\epsilon t}{t-k}\bigg)^k,
\end{align*}
so using the assumptions that $t\geq ck^2$ and $\epsilon=\frac1{ck}$
we conclude that $\Prob[\rho\!=\!\tau]\geq
(1-\frac 2c)\Prob[\sigma'\!\!=\!\tau]$ for all $\tau\in[n]^t$, which means that by coupling we
can ensure that $\Prob[\rho=\sigma']\geq 1-\frac2c$.

We next couple the output sets $\Sc_1$ and $\Sc_1'$ by first
conditioning on the event that $\rho=\sigma'$, and then observing that
distributions $\Sc\sim\mu_{\sigma,p}$ and $\Sc'\sim\mu_{\sigma',p}$ are identical as
long as $\Sc$ does not select any indices corresponding to the input
set $S_0$. Without loss of generality, fix the permutation of $\sigma$
so that $\sigma=\langle\rho_1,...,\rho_t,s_1,...,s_k\rangle$, where
$S_0=\{s_1,...,s_k\}$. Letting $r=t+k$, we now lower bound the probability 
that $\Sc$ does not select any of the last $k$ indices:
\begin{align*}
  \Prob\big[\Sc\subseteq[t]\big]=\sum_{\tau\in[n]^t}\Prob[\rho=\tau]\sum_{S\subseteq[t]}\frac{\mu(\tau_S)\prod_{i\in
  S}p(\tau_i)^{-1}}{Z_{\mu,p}(\langle\tau,s_1,...,s_k\rangle)}\geq
  \sum_{\tau\in[n]^t}\frac{p(\tau)Z_{\mu,p}(\tau)}{(r+\epsilon
  r)^k}=\frac{k!{t\choose k}}{(r+\epsilon r)^k}\geq 1-\frac3c.
\end{align*}
Observe that, conditioned on the events that $\rho=\sigma'$ and that
$\Sc\subseteq[t]$, the distributions of $\Sc_1$ and $\Sc_1'$ are
identical. A union bound shows that the two events occur together with 
probability at least $1- \frac5c$, completing the proof.
\end{proof}
Setting $c=20$ in \cref{lem:coupling}, we obtain that if $t\geq
20k^2$ and $k\cdot p(i)\geq
\big(1+\frac1{20k}\big)^{-1}\Prob_{\Sc\sim\mu}[i\in\Sc]$, then the
transition matrix $P$ of $\Mc_{\mu,p}^t$ for any state
$S_0\in\supp(\mu)$ satisfies $\|P(S_0,\cdot)-\mu\|_\tv\leq 1/4$. This,
combined with \cref{lem:convergence} and
\cref{thm:mcmc-classical}, implies that:
\begin{align*}
  \sup_{S_0\in\supp(\mu)}\big\|P^s(S_0,\cdot)-\mu\big\|_\tv\leq\epsilon\quad\text{for}\quad
  s\geq \log1/\epsilon.
\end{align*}

	\section{Negative Dependence Inequalities}
\label{sec:negdep}

In this section we prove \cref{lem:condition}. As a corollary of our techniques, we will additionally prove an approximate negative correlation inequality which generalizes results of \textcite{HSW18} and may be of independent interest. First we strengthen \cref{thm:MLSI} to show how KL-divergences contract under the $D_{k\to l}$ operator.
\begin{lemma}\label{lem:kl-shrink}
	If $\mu:\binom{[n]}{k}\to \R_{\geq 0}$ is a distribution with a log-concave generating polynomial, and $\nu:\binom{[n]}{k}\to \R_{\geq 0}$ is an arbitrary distribution, then
	\[ \D{\nu D_{k\to l}\mid \mu D_{k\to l}}\leq \frac{l}{k}\D{\nu\mid \mu}. \]
\end{lemma}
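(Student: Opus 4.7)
The plan is to prove Lemma \ref{lem:kl-shrink} by downward induction on $l$, using Theorem \ref{thm:MLSI} as the one-step contraction at each level. The base case $l = k$ is trivial since $D_{k\to k}$ is the identity and the inequality reads $\D{\nu\mid \mu}\leq \D{\nu\mid \mu}$. For the inductive step, I would use the composition identity $D_{k\to l}=D_{k\to k-1}D_{k-1\to l}$ (stated after \cref{def:down}) to write
\[ \D{\nu D_{k\to l}\mid \mu D_{k\to l}}=\D{(\nu D_{k\to k-1})D_{k-1\to l}\mid (\mu D_{k\to k-1})D_{k-1\to l}}, \]
apply the inductive hypothesis (at rank $k-1$) to bound this by $\frac{l}{k-1}\D{\nu D_{k\to k-1}\mid \mu D_{k\to k-1}}$, and then apply Theorem \ref{thm:MLSI} to bound the latter by $\frac{k-1}{k}\D{\nu\mid \mu}$. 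Multiplying the two factors gives $\frac{l}{k}\D{\nu\mid \mu}$, as desired.

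The main (and only) obstacle is that the inductive hypothesis requires the ``new'' stationary distribution $\mu D_{k\to k-1}$ to itself have a log-concave generating polynomial, which is the standing hypothesis of both Theorem \ref{thm:MLSI} and the lemma being proved. So I would insert a preliminary claim: if $\mu$ is a $k$-homogeneous distribution with log-concave generating polynomial $g_\mu$, then $g_{\mu D_{k\to k-1}}$ is log-concave. This is a direct computation: for any $T\in\binom{[n]}{k-1}$,
\[ (\mu D_{k\to k-1})(T)=\frac{1}{k}\sum_{S\supseteq T,\, |S|=k}\mu(S), \]
so expanding the generating polynomial and re-indexing the sum over pairs $(T,S)$ with $T\subset S$ by the element $j=S\setminus T$ gives
\[ g_{\mu D_{k\to k-1}}(z)=\frac{1}{k}\sum_{j=1}^n \partial_j g_\mu(z)=\frac{1}{k}\,\partial_{\mathbf{1}} g_\mu(z), \]
where $\mathbf{1}\in\R_{\geq 0}^n$ is the all-ones vector. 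By Lemma \ref{lem:clc}, $g_\mu$ is completely log-concave, so $\partial_{\mathbf{1}} g_\mu$ is log-concave, and hence so is $g_{\mu D_{k\to k-1}}$.

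With this closure property in hand, the induction goes through without any additional difficulty, and the proof is essentially a two-line telescoping argument together with the derivative-closure computation above. Since the MLSI contraction factor $\frac{k-1}{k}$ of Theorem \ref{thm:MLSI} telescopes multiplicatively across the $k-l$ one-step descents as $\frac{k-1}{k}\cdot\frac{k-2}{k-1}\cdots\frac{l}{l+1}=\frac{l}{k}$, the claimed bound emerges naturally and the constant $\frac{l}{k}$ is simply the product of the one-step constants.
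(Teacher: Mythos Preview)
Your proof is correct and matches the paper's approach: iterate the one-step contraction of Theorem~\ref{thm:MLSI}, using \cref{lem:clc} to check that each intermediate pushed-down distribution retains a log-concave generating polynomial. The only cosmetic difference is that the paper peels off the bottom step (writing $D_{k\to l-1}=D_{k\to l}D_{l\to l-1}$ and applying Theorem~\ref{thm:MLSI} at level $l$) while you peel off the top step (writing $D_{k\to l}=D_{k\to k-1}D_{k-1\to l}$ and applying the inductive hypothesis at rank $k-1$); both telescope to the same product $\tfrac{l}{k}$.
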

\begin{proof}
	We prove this by induction on $k-l$. When $k-l=1$, this is already the contents of \cref{thm:MLSI}. Note that the distribution $\mu D_{k\to l}$ always has a log-concave generating polynomial. This is because
	\[ g_{\mu D_{k\to l}}\propto (\partial_1+\dots+\partial_n)^{k-l} g_\mu  \]
	and by \cref{lem:clc}, the above polynomial is log-concave. So assuming the statement is correct for $k$ and $l$, we can prove it for $k$ and $l-1$ by applying \cref{thm:MLSI} to the distributions $\nu D_{k\to l}$ and $\mu D_{k\to l}$; we obtain
	\[ \D{\nu D_{k\to l}D_{l\to l-1} \mid \mu D_{k\to l}D_{l\to l-1} }\leq \frac{l-1}{l}\D{\nu D_{k\to l}\mid \mu D_{k\to l}}. \]
	But note that $D_{k\to l}D_{l\to l-1}=D_{k\to l-1}$. Combining this with the shrinkage of KL-divergence for $D_{k\to l}$ we get
	\[ \D{\nu D_{k\to l-1}\mid \mu D_{k\to l-1}}\leq \frac{l-1}{l}\cdot \frac{l}{k} \D{\nu\mid \mu}=\frac{l-1}{k}\D{\nu\mid \mu}, \]
        which completes the proof.
\end{proof}

In the rest of this section we will apply \cref{lem:kl-shrink} for $l=1$. Note that $\mu D_{k\to 1}$ is a distribution on $\binom{[n]}{1}\cong [n]$, which assigns a weight of $\frac{1}{k}\PrX{S\sim \mu}{i\in S}$ to every singleton $\set{i}$. As a warmup, we first prove a form of negative correlation for distributions with a log-concave generating polynomial.
\begin{proposition}\label{prop:neg1}
	If $\mu:\binom{[n]}{k}\to \R_{\geq 0}$ is a distribution with a log-concave generating polynomial, then for any set $T\in \binom{[n]}{k}$,
	\[ \mu(T)=\PrX{S\sim \mu}{S=T}\leq \prod_{i\in T} \PrX{S\sim \mu}{i\in S}.  \]
\end{proposition}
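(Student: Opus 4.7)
The plan is to apply \cref{lem:kl-shrink} with $l=1$, choosing $\nu$ to be the point mass at $T$. The point mass gives $\D{\nu \mid \mu} = \log(1/\mu(T))$ on the right-hand side, and the pushforward $\nu D_{k\to 1}$ is cleanly computable: it places mass $1/k$ on each singleton $\{i\}$ with $i \in T$, i.e., the uniform distribution on $T$.

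Next I would compute the left-hand side of \cref{lem:kl-shrink} for this choice. Since $\mu D_{k\to 1}(\{i\}) = \tfrac{1}{k}\PrX{S\sim\mu}{i\in S}$, the divergence becomes
\[ \D{\nu D_{k\to 1} \mid \mu D_{k\to 1}} = \sum_{i\in T} \frac{1}{k}\log\frac{1/k}{(1/k)\PrX{S\sim\mu}{i\in S}} = \frac{1}{k}\sum_{i\in T}\log\frac{1}{\PrX{S\sim\mu}{i\in S}}. \]
Plugging this into \cref{lem:kl-shrink} (with $l/k = 1/k$) and multiplying both sides by $k$ gives
\[ \sum_{i\in T}\log\frac{1}{\PrX{S\sim\mu}{i\in S}} \leq \log\frac{1}{\mu(T)}, \]
which, after exponentiating and rearranging, is precisely the claimed inequality $\mu(T) \leq \prod_{i\in T}\PrX{S\sim\mu}{i\in S}$.

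This proof is essentially mechanical once the reduction to \cref{lem:kl-shrink} is in hand. There is no serious obstacle: the only points to double-check are the description of $\nu D_{k\to 1}$ for a point mass $\nu$ (each of the $k$ elements of $T$ is picked with probability $1/k$ by the $D_{k\to 1}$ operation) and that $\mu(T)>0$ is handled gracefully (if $\mu(T)=0$, the inequality is trivial; otherwise $\nu$ is absolutely continuous with respect to $\mu$ and the KL-divergences are finite). The argument only uses that $\mu$ has a log-concave generating polynomial through \cref{lem:kl-shrink}, so no additional structural hypotheses on $\mu$ are needed.
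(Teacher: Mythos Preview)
Your proposal is correct and essentially identical to the paper's proof: both take $\nu$ to be the point mass at $T$, compute $\D{\nu\mid\mu}=\log(1/\mu(T))$ and $\D{\nu D_{k\to 1}\mid \mu D_{k\to 1}}=\tfrac1k\sum_{i\in T}\log(1/\PrX{S\sim\mu}{i\in S})$, apply \cref{lem:kl-shrink} with $l=1$, and exponentiate. Your explicit handling of the $\mu(T)=0$ case is a welcome addition not spelled out in the paper.
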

\begin{proof}
	Let $\nu:\binom{[n]}{k}\to \R_{\geq 0}$ be defined to be $0$ everywhere, except at $T$, where $\nu(T)=1$. Then
	\[ \D{\nu \mid \mu}:= \ExX*{S\sim \nu}{\log \frac{\nu(S)}{\mu(S)}}=\log \frac{1}{\mu(T)}. \]
	Note that $\nu D_{k\to 1}$ is the uniform distribution over elements of $T$. Therefore
	\[ \D{\nu D_{k\to 1}\mid \mu D_{k\to 1}}=\ExX*{\set{i}\sim \nu D_{k\to 1}}{\log \frac{\nu D_{k\to 1}(\set{i})}{\mu D_{k\to 1}(\set{i})}} =\frac{1}{k}\sum_{i\in T}\log \frac{1/k}{\PrX{S\sim \mu}{i\in S}/k}. \]
	Using \cref{lem:kl-shrink} for $l=1$, we obtain that
	\[ \sum_{i\in T} \log \frac{1}{\PrX{S\sim \mu}{i\in S}}\leq \log \frac{1}{\mu(T)}. \]
	Exponentiating and rearranging the above yields the desired inequality.
\end{proof}

Next we prove a negative correlation property that generalizes some results of \textcite{HSW18}. This result can be interpreted as negative correlation up to a factor of $2$, and it may be of independent interest.
\begin{lemma}
	If $\mu:\binom{[n]}{k}\to\R_{\geq 0}$ is a distribution with a log-concave generating polynomial, then for any set $T$ of size $l\leq k$, we have
	\[ \PrX{S\sim \mu}{T\subseteq S}\leq \alpha(k, l)\cdot \prod_{i\in T}\PrX{S\sim \mu}{i\in S}, \] 
	where $\alpha(k, l)=\binom{k}{l}(\frac{l}{k})^l\leq l^l/l!$. In particular for $l=2$, $\alpha(k, l)\leq 2$, and more generally $\alpha(k, l)\leq e^l$.
\end{lemma}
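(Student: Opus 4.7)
The plan is to reduce this inequality to the single-set case already proved in \cref{prop:neg1}, by passing from $\mu$ to its size-$l$ coarsening $\mu D_{k\to l}$. The key structural observation is that $\mu D_{k\to l}$ is itself a distribution on $\binom{[n]}{l}$ with a log-concave generating polynomial: expanding definitions one sees
\[ g_{\mu D_{k\to l}}\propto (\partial_1+\cdots+\partial_n)^{k-l}g_\mu, \]
and this polynomial is log-concave by \cref{lem:clc} (exactly as already used in the proof of \cref{lem:kl-shrink}). Consequently \cref{prop:neg1} applies to $\mu D_{k\to l}$ at the target set $T$, giving $(\mu D_{k\to l})(T)\leq \prod_{i\in T}\PrX{T'\sim \mu D_{k\to l}}{i\in T'}$.

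Next I would translate both sides back to statements about $\mu$. On the left, the definition of $D_{k\to l}$ immediately gives
\[ (\mu D_{k\to l})(T)=\frac{1}{\binom{k}{l}}\PrX{S\sim \mu}{T\subseteq S}. \]
On the right, conditioning on $S$ shows that a uniformly random $l$-subset of $S$ contains a fixed element $i$ with probability $(l/k)\cdot \1_{i\in S}$, hence
\[ \PrX{T'\sim \mu D_{k\to l}}{i\in T'}=\frac{l}{k}\PrX{S\sim\mu}{i\in S}. \]
Substituting these two identities into \cref{prop:neg1} and clearing the $\binom{k}{l}$ yields exactly
\[ \PrX{S\sim \mu}{T\subseteq S}\leq \binom{k}{l}\parens*{\frac{l}{k}}^l\prod_{i\in T}\PrX{S\sim \mu}{i\in S}=\alpha(k,l)\prod_{i\in T}\PrX{S\sim\mu}{i\in S}. \]

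The final numerical bounds on $\alpha(k,l)$ are routine. The inequality $\binom{k}{l}\leq k^l/l!$ gives $\alpha(k,l)\leq l^l/l!$, which equals $2$ when $l=2$; and the Taylor expansion $e^l=\sum_{j\geq 0}l^j/j!\geq l^l/l!$ gives the crude bound $\alpha(k,l)\leq e^l$. I do not anticipate any real obstacle: once one notices that the down-operator preserves log-concavity of the generating polynomial and scales each singleton marginal by the factor $l/k$, the lemma follows by a one-line application of \cref{prop:neg1} to $\mu D_{k\to l}$, with the multiplicative constant $\alpha(k,l)$ arising entirely as the bookkeeping factor of this coarsening.
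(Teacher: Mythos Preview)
Your proof is correct and follows essentially the same approach as the paper: apply \cref{prop:neg1} to $\mu D_{k\to l}$, then translate $(\mu D_{k\to l})(T)$ and the marginals of $\mu D_{k\to l}$ back in terms of $\mu$. If anything, you are slightly more careful, explicitly justifying log-concavity of $g_{\mu D_{k\to l}}$ and spelling out the numerical bounds on $\alpha(k,l)$.
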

\begin{proof}
	We apply \cref{prop:neg1} to the distribution $\mu D_{k\to l}$. Note that the marginals of $\mu D_{k\to l}$ are simply $l/k$ times those of $\mu$, because in $D_{k\to l}$ we only keep $l$ out of $k$ elements uniformly at random:
	\[ \PrX{S\sim \mu D_{k\to l}}{i\in S}=\frac{l}{k}\PrX{S\sim \mu}{i\in S}. \]
	Further $\mu D_{k\to l}(T)$ is simply the chance that a sample $S\sim \mu$ contains $T$ times the chance that in dropping from $k$ to $l$ elements, we exactly choose elements of $T$, which is $1/\binom{k}{l}$:
	\[ \mu D_{k\to l}(T)=\frac{1}{\binom{k}{l}}\PrX{S\sim \mu}{T\subseteq T}. \]
	Putting these together we get
	\[ \frac{1}{\binom{k}{l}}\PrX{S\sim \mu}{T\subseteq T} \leq \prod_{i\in T}\parens*{\frac{l}{k}\PrX{S\sim \mu}{i\in S}}. \]
\end{proof}
 
So far we have been bounding the probability of $\PrX{S\sim \mu}{T\subseteq S}$. Next, we prove a bound on the event that $S\subseteq T$. This is the key ingredient for proving \cref{lem:condition}.

\begin{lemma}\label{lem:neg2}
	If $\mu:\binom{[n]}{k}\to\R_{\geq 0}$ is a distribution with a log-concave generating polynomial, then for any set $T$, we have
	\[ \PrX{S\sim \mu}{S\subseteq T}\leq \parens*{\frac{\sum_{i\in T} \PrX{S\sim \mu}{i\in S}}{k}}^k. \]
\end{lemma}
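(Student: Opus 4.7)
The strategy mirrors the proof of \cref{prop:neg1}: construct a conditional distribution $\nu$ whose KL-divergence from $\mu$ encodes $\PrX{S\sim\mu}{S\subseteq T}$, apply the KL-shrinkage \cref{lem:kl-shrink} with $l=1$, and then extract a lower bound on the shrunken divergence via a convexity argument.

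Assume $\PrX{S\sim\mu}{S\subseteq T}>0$ (otherwise there is nothing to prove). Define
\[ \nu(S)=\begin{cases} \mu(S)/\PrX{S'\sim\mu}{S'\subseteq T} & \text{if }S\subseteq T,\\ 0 & \text{otherwise},\end{cases} \]
so that $\nu$ is the distribution $\mu$ conditioned on $S\subseteq T$. A direct calculation gives $\D{\nu\mid\mu}=\log\bigl(1/\PrX{S\sim\mu}{S\subseteq T}\bigr)$, exactly as in the proof of \cref{prop:neg1}. The key change from that proof is that $\nu$ is no longer supported on a single set, so its push-forward $\nu D_{k\to 1}$ is a genuine distribution on $[n]$ supported inside $T$, rather than the uniform distribution on $T$.

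Next I apply \cref{lem:kl-shrink} with $l=1$ to get
\[ \D{\nu D_{k\to 1}\mid \mu D_{k\to 1}}\leq \frac{1}{k}\log\frac{1}{\PrX{S\sim\mu}{S\subseteq T}}. \]
Writing $q_i=\PrX{S\sim\nu}{i\in S}$ and $p_i=\PrX{S\sim\mu}{i\in S}$, and using that both $\nu$ and $\mu$ are $k$-homogeneous (so the singleton masses under $D_{k\to 1}$ are $q_i/k$ and $p_i/k$ respectively, with $\sum_{i\in T}q_i=k$ and $q_i=0$ for $i\notin T$), the left-hand side equals
\[ \sum_{i\in T}\frac{q_i}{k}\log\frac{q_i}{p_i}. \]

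The last step is to lower bound this sum by the log-sum inequality (equivalently, Jensen applied to the convex function $x\log x$), which yields
\[ \sum_{i\in T}\frac{q_i}{k}\log\frac{q_i}{p_i}\geq \log\frac{\sum_{i\in T}q_i}{\sum_{i\in T}p_i}=\log\frac{k}{\sum_{i\in T}p_i}. \]
Chaining the two inequalities, multiplying through by $k$, and exponentiating gives exactly
\[ \PrX{S\sim\mu}{S\subseteq T}\leq \parens*{\frac{\sum_{i\in T}p_i}{k}}^k, \]
as required. I do not anticipate a real obstacle: the only subtle point is to observe that $\nu D_{k\to 1}$ is supported in $T$ (so the log-sum inequality does not waste any mass on irrelevant indices) and that the normalizing constants of $1/k$ cancel on both sides of the logarithm.
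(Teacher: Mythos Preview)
Your proposal is correct and follows essentially the same route as the paper: define $\nu$ as $\mu$ conditioned on $\{S\subseteq T\}$, compute $\D{\nu\mid\mu}=\log(1/\Pr[S\subseteq T])$, apply \cref{lem:kl-shrink} with $l=1$, and lower bound the resulting divergence. The only cosmetic difference is that where you invoke the log-sum inequality directly, the paper phrases the same step as ``among distributions supported on $T$, the one minimizing $\D{\cdot\mid \mu D_{k\to 1}}$ is $\mu D_{k\to 1}$ conditioned on $T$''---these are equivalent formulations of the same convexity fact.
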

\begin{proof}
	Consider the distribution $\nu:\binom{n}{k}\to \R_{\geq 0}$ defined as $\mu$ conditioned on the sampled set being $\subseteq T$:
	\[ \nu(S)\propto \begin{cases}
			\mu(S)& S\subseteq T,\\
			0& \text{otherwise}.
		\end{cases} \]
		Notice that the normalizing constant in the above definition is $\PrX{S\sim \mu}{S\subseteq T}$. It is easy to see that
		\[ \D{\nu \mid \mu}=\log \parens*{\frac{1}{\PrX{S\sim\mu}{S\subseteq T}}}. \]
		Let $D_{k\to 1}\in \R^{\binom{n}{k}\times n}$ be the operator that maps a distribution on sets of size $k$ to a distribution on singletons, by first sampling a set from the original distribution, and then sampling a uniformly random element inside of it.
		\[ D_{k\to 1}(S, i)=\begin{cases}
			\frac{1}{k}& i\in S,\\
			0 & i\notin S.
	\end{cases}
	 \]
	Applying $D_{k\to 1}$ shrinks the KL-divergence by a factor of $k$ by \cref{lem:kl-shrink}. In other words
	\[ \D{\nu D_{k\to 1}\mid \mu D_{k\to 1}}\leq \frac{1}{k} \D{\nu \mid \mu}. \]
	It is easy to see that $\mu D_{k\to 1}(i)=\PrX{S\sim \mu}{i\in S}/k$. On the other hand, $\nu D_{k\to 1}$ is not easily expressible in terms of the marginals. But notice that $\nu D_{k\to 1}$ is a distribution on $[n]$ whose support is inside of $T$. We can ask, among all distributions supported on $T$, which one minimizes the KL-divergence w.r.t.\ $\mu D_{k\to 1}$? It is exactly the one obtained from $\mu D_{k\to 1}$ by conditioning on being inside $T$ \cite[see, e.g.,][]{CT12}. In other words, if we define
	\[ \omega(i)\propto \begin{cases}
		\mu D_{k\to 1}(i)& i\in T\\
		0& i\notin T,\\
		\end{cases}
	\]
	then
	 \[\D{\omega\mid \mu D_{k\to 1}}\leq \D{\nu D_{k\to 1} \mid \mu D_{k\to 1}}\leq \frac{1}{k}\D{\nu \mid \mu}. \]
	 With this definition of $\omega$, it is easy to compute $\D{\omega\mid \mu D_{k\to 1}}$:
	 \[ \D{\omega \mid \mu D_{k\to 1}}=\log\parens*{\frac{1}{\sum_{i\in T}\mu D_{k\to 1}(i)}}=\log\parens*{\frac{k}{\sum_{i\in T} \PrX{S\sim \mu}{i\in S}}}. \]
	 Putting this altogether we get that
	 \[ \log\parens*{\frac{k}{\sum_{i\in T} \PrX{S\sim \mu}{i\in S}}}\leq \frac{1}{k}\log\parens*{\frac{1}{\PrX{S\sim\mu}{S\subseteq T}}}.\]
	 Exponentiating and rearranging gives us the result.
\end{proof}

We are now ready to prove \cref{lem:condition}.
\begin{proof}[Proof of \cref{lem:condition}]
	First, note that if there are any repetitions in $\tau$, we can reduce the size of $\tau$ by merging all duplicates of an element $e$ together, and replacing $p(e)$ by $p(e)/m$, where $m$ is the number of duplicates. This does not change the r.h.s.\ of the desired inequality; previously there were $m$ terms of the form $\frac{\PrX{S\sim \mu}{e\in S}}{kp(e)}$, and now there is one term equal to $\frac{\PrX{S\sim \mu}{e\in S}}{kp(e)/m}$. Similarly, the l.h.s.\ does not change because every set $S$ containing $e$ previously appeared $m$ times, once for each copy of $e$, and now it appears once, but with $m$ times the value.
	
	Now that there are no duplicates in $\tau$, we can think of $\tau$ as a set $T$. The inequality we need to prove is
	\[ \sum_{S\in \binom{T}{k}} \parens*{\frac{\mu(S)}{\prod_{i\in S}p(i)}} \leq \parens*{\sum_{i\in T} \frac{\PrX{S\sim \mu}{i\in S}}{kp(i)}}^k. \]
	Note that if all $p(i)$ were equal to $1$, this would be exactly \cref{lem:neg2}.
	
	Our strategy is to first prove this inequality when $p$ takes integral values, by reducing to $p=1$, and then generalize to rational $p$ values. Note that both sides of the inequality are $k$-homogeneous in the $p$ values. So by scaling the inequality for integral $p$ implies the same inequality for rational $p$. Finally by continuity, we get the inequality for general $p:[n]\to \R_{\geq 0}$. So from now on assume that $p:[n]\to \Z_{\geq 0}$ takes only integer values.
	
	We will apply \cref{lem:neg2} to a subdivision of $\mu$. Consider a set of $m$ elements, where $m=p(1)+\dots+p(n)$. Let $\pi:[m]\to [n]$ be a projection defined by mapping $p(i)$ many distinct elements in $[m]$ to $i$. Now consider the distribution $\mu':\binom{[m]}{k}\to\R_{\geq 0}$ defined as follows: First we sample $\set{e_1,\dots, e_k}\sim \mu$, and then we replace each $e_i$ with a uniformly random element of $\pi^{-1}(e_i)$ to obtain a set $S'\in \binom{[m]}{k}$. This operation preserves log-concavity of the generating polynomial, because $g_{\mu'}$ can be obtained from $g_\mu$ by composing with a linear map:
	\[ g_{\mu'}(z_1,\dots,z_m)=g_\mu\parens*{\frac{\sum_{i\in \pi^{-1}(1)}z_i}{p(1)},\dots,\frac{\sum_{i\in \pi^{-1}(n)} z_i}{p(n)}}. \]
	Now let $T'$ be a set of the same size as $T$ such that $\pi(T')=T$; that is, for each element $i\in T$ we choose an arbitrary representative of $\pi^{-1}(i)$ and call the collection of these representatives $T'$. What is the chance that $S'\sim \mu'$ is contained in $T'$? To produce $S'$, we can first sample $S\sim \mu$. It must be that $S\in \binom{T}{k}$, or else $S'$ will not be contained in $T'$; but this is not enough. For each such set $S$, the chance that the random replacements of all elements end up coinciding with our representative choices for $T'$ is exactly $\prod_{i\in S} p(i)^{-1}$. So we get that
	\[ \PrX{S'\sim \mu'}{S'\subseteq T'}=\sum_{S\in \binom{T}{k}}\parens*{\mu(S)\prod_{i\in S}p(i)^{-1}}. \]
	On the other hand, the marginals of $\mu'$ are just $1/p(i)$ fraction of those of $\mu$. That is
	\[ \PrX{S'\sim \mu'}{i\in S'}=\frac{\PrX{S\sim \mu}{\pi(i)\in S}}{p(\pi(i))}. \]
	So
	\[ \sum_{i\in T'} \frac{\PrX{S'\sim \mu'}{i\in S'}}{k}=\sum_{i\in T}\frac{\PrX{S\sim \mu}{i\in S}}{kp(i)}. \]
	The desired inequality follows from applying \cref{lem:neg2} to $\mu'$.
\end{proof}

	\section{Hierarchical Sampling}
\label{sec:hierarchical}

In this section we complete the proof of \cref{thm:main} by
presenting an efficient implementation of the Markov chain
$\Mc_{\mu,p}^t$ defined in \cref{sec:instant}. In the next section we show how to
compute the sampling distribution $p$ so that it
approximates the marginals of $\mu$ sufficiently well, as required by
\cref{lem:coupling}. We measure the complexity by the number of
queries of $\mu(S)$, since they represent the dominant cost of all the
algorithms. The main tool we use to accomplish both of
these tasks is the ``down-up'' Markov chain, denoted here as
$\Mc_\mu$, which for any $\mu$ with a log-concave generating
polynomial converges to within $\epsilon$ total variation distance of $\mu$ after $O(k\log(k/\epsilon))$
steps, and each step requires $n$ queries (\cref{thm:down-up}).

Our implementation of $\Mc_{\mu,p}^t$ can be viewed as a hierarchical
sampler with two levels: the top level is the chain $\Mc_{\mu,p}^t$
itself, whereas the bottom level is a separate MCMC sampler which is
used to approximate each step of the top level. The expensive part of
each step of $\Mc_{\mu,p}^t$ involves sampling from the distribution
$\mu_{\sigma,p}$ which is defined over $\binom{[t+k]}{k}$. Since
$\mu_{\sigma,p}$ has a log-concave generating polynomial (see
discussion in \cref{sec:negdep}), we can use
$\Mc_{\mu_{\sigma,p}}$ to approximately sample from it. The overall
implementation of one step of $\Mc_{\mu,p}^t$ given state $\Sc_i$, illustrated in
\Cref{fig:hierarchical}, proceeds as follows:
\begin{enumerate}
  \item Construct random sequence $\sigma$ by i.i.d.~sampling $\rho_1,...,\rho_t$ from $p$
    and permuting $\Sc_i$ into it.
  \item Return $\Sc_{i+1}=\Sc_{i,s}$ obtained from
    $s$ steps of $\Mc_{\mu_{\sigma,p}}$ started at $\Sc_{i,0}=\Sc_i$.
\end{enumerate}\vspace{3mm}
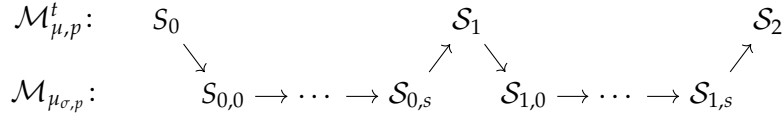
\begin{figure}
\centering
\begin{tikzpicture}
  \draw (0,0) node (A1) {$S_0$};
  \draw (0.75,-1) node (C1) {$S_{0,0}$};
  \draw (2,-1) node (C2) {$\dots$};
  \draw (3.25,-1) node (C3) {$\Sc_{0,s}$};
  \draw (4,0) node (A2) {$\Sc_1$};
  \draw (-1.5,0) node (D1) {$\Mc_{\mu,p}^t\!:$};
  \draw (-1.5,-1) node (D2) {$\Mc_{\mu_{\sigma,p}}\!:$};
  \draw [->] (A1) -- (C1);
  \draw [->] (C1) -- (C2);
  \draw [->] (C2) -- (C3);
  \draw [->] (C3) -- (A2);
  \draw (4.75,-1) node (C4) {$\Sc_{1,0}$};
  \draw (6,-1) node (C5) {$\dots$};
  \draw (7.25,-1) node (C6) {$\Sc_{1,s}$};
  \draw (8,0) node (A3) {$\Sc_2$};
  \draw [->] (A2) -- (C4);
  \draw [->] (C4) -- (C5);
  \draw [->] (C5) -- (C6);
  \draw [->] (C6) -- (A3);    
  \end{tikzpicture}
  \caption{Hierarchical sampling which uses $\Mc_{\mu_{\sigma,p}}$ to
    approximate each step of $\Mc_{\mu,p}^t$.}\label{fig:hierarchical}
\end{figure}

\begin{lemma}\label{lem:hierarchical}
  Let $\mu$ be a $k$-homogeneous distribution with a
log-concave polynomial and take any $0<\epsilon\leq 1/2$. If $s=
\lceil3k^2\log(1/\epsilon)\rceil$, $t= 20k^2$ and $k\cdot p(i)\geq 
  \big(1+\frac1{20k}\big)^{-1}\Prob_{\Sc\sim\mu}[i\in \Sc]$ for all $i\in[n]$, then
  starting from any $S_0\in\supp(\mu)$, the above procedure returns
  $\Sc_l$ within $\epsilon$ total variation distance of $\mu$ for $l= \lceil\log(2/\epsilon)\rceil$ and
  requires $\poly(k)\cdot O(\log^21/\epsilon)$ queries.
\end{lemma}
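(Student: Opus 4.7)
The plan is to combine the outer-chain mixing from \cref{lem:coupling} with the inner-chain mixing from \cref{thm:down-up}, controlling the total variation error accumulated by approximating each outer transition. Two error sources add up: the exact outer chain needs $l$ steps to come within $\epsilon/2$ of $\mu$, and the inner approximation contributes at most $\epsilon/(2l)$ per outer step, for a total bounded by $\epsilon$.

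First, for the outer chain, plugging $c=20$ into \cref{lem:coupling} and using the assumed $t=20k^2$ together with the $(1+\tfrac{1}{20k})^{-1}$ approximation of the marginals, the exact transition kernel $P$ of $\Mc_{\mu,p}^t$ satisfies $\norm{P(S_0,\cdot)-\mu}_\tv \leq 1/4$ from every $S_0\in\supp(\mu)$. \cref{thm:mcmc-classical} then yields $\norm{P^l(S_0,\cdot)-\mu}_\tv \leq \epsilon/2$ for $l=\lceil\log(2/\epsilon)\rceil$.

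Next, for the inner chain, $\Mc_{\mu_{\sigma,p}}$ is the down-up walk on $\binom{[t+k]}{k}$ with stationary distribution $\mu_{\sigma,p}$, whose generating polynomial is $g_{\mu_{\sigma,p}}(z)\propto g_\mu(T(z))$ for the nonnegative linear map $T:\R_{\geq 0}^{t+k}\to\R_{\geq 0}^n$ defined by $(T(z))_j=\sum_{i:\sigma_i=j} z_i/p(\sigma_i)$; since log-concavity is preserved under composition with such linear maps, $g_{\mu_{\sigma,p}}$ is log-concave and \cref{thm:down-up} applies with mixing time $O(k\log(k/\epsilon'))$ to within TV distance $\epsilon'$ from any state in $\supp(\mu_{\sigma,p})$. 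The starting state $\Sc_{i,0}=\Sc_i$ (read as the set of indices in $[t+k]$ where $\Sc_i$ was permuted into $\sigma$) lies in $\supp(\mu_{\sigma,p})$ because $\mu(\Sc_i)>0$. Choosing $\epsilon'=\epsilon/(2l)$, the bound $s=\lceil 3k^2\log(1/\epsilon)\rceil$ is comfortably sufficient.

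Combining by a telescoping coupling between the exact and approximate outer chains, the hierarchical sampler's output after $l$ outer steps is within $l\cdot\epsilon/(2l)+\epsilon/2=\epsilon$ of $\mu$ in total variation. For the query count, each step of $\Mc_{\mu_{\sigma,p}}$ requires $O(t+k)=O(k^2)$ evaluations of $\mu_{\sigma,p}$, each of which reduces to a single query of $\mu$, so the total number of queries is $l\cdot s\cdot O(k^2)=\poly(k)\cdot O(\log^2(1/\epsilon))$. The main technical point that needs care is verifying that $\mu_{\sigma,p}$ inherits a log-concave generating polynomial so that \cref{thm:down-up} can be legitimately invoked on a random ground set of size $t+k$; once that is in hand, the rest amounts to bookkeeping of TV errors via a standard telescoping union bound.
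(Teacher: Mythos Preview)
Your proposal is correct and follows essentially the same approach as the paper: split the error into $\epsilon/2$ for the outer chain (via \cref{lem:coupling} with $c=20$ and \cref{thm:mcmc-classical}) and $\epsilon/(2l)$ per inner step (via \cref{thm:down-up} applied to $\mu_{\sigma,p}$), then combine by a coupling/union bound. Your explicit linear-map argument for the log-concavity of $g_{\mu_{\sigma,p}}$ is a welcome detail the paper only gestures at, and your observation that the starting state lies in $\supp(\mu_{\sigma,p})$ is a point the paper leaves implicit; otherwise the two proofs are the same.
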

\begin{proof}
  The proof is again based on a coupling argument. Let $\Sc_i'$ denote
  the sequence of random states produced by the chain $\Mc_{\mu,p}^t$
  started at $S_0$. From \cref{lem:coupling} it follows that
  $\Sc_l'$ is within $\epsilon/2$ total variation distance of
  $\mu$. To couple this chain with our sampling procedure note that
since $s\geq 3k\log(k/\epsilon)\geq k\log(k/\delta)$ for $\delta =
\frac\epsilon{2\log(2/\epsilon)}$,  log-concavity of the generating
polynomial of $\mu_{\sigma,p}$ together with \cref{thm:down-up} imply that
any given step of our procedure is only $\delta$ total variation
distance away from an exact step of the chain $\Mc_{\mu,p}^t$. So, a
union bound implies that with probability at least $1-l\delta\geq 1- 
\frac\epsilon2$ the entire procedure is identical to the procedure
that generates $\Sc_l'$. We conclude that $\Sc_l$ must be within $\epsilon$ total
variation distance from $\mu$. Since the support of each distribution
$\mu_{\sigma,p}$ is contained in $2^{[20k^2+k]}$, the number of
queries per step of $\Mc_{\mu_{\sigma,p}}$ is bounded by $21k^2$.
Further, the total number of those steps is $ls\leq
3k\log(k/\epsilon)\log(2/\epsilon)$, so the number of queries is
$O(k^3\log^2k/\epsilon)$.
\end{proof}

	\section{Marginal Estimation and High-Precision Counting}
\label{sec:counting}

In this section we prove \cref{thm:oracle-construct,thm:counting}. We use the standard trick of introducing a cooling schedule. Our strategy is to introduce a sequence of distributions $\mu_0,\dots, \mu_t=\mu$ that together define a cooling schedule. All $\mu_i$ will have a log-concave generating polynomial. We will guarantee that $\mu_i$ and $\mu_{i+1}$ do not differ by more than a factor of $1+O(1/k)$ pointwise. We will make sure that $\mu_0$'s marginals are easy to estimate, and $t=\poly(\log n, k)$ is not too large. Then we use this cooling schedule to successively estimate the marginals of each $\mu_i$. This will prove \cref{thm:oracle-construct}. Afterwards we use standard unbiased estimators for the ratios of partition functions of successive $\mu_i$ and combine the estimates to prove \cref{thm:counting}. 

Our cooling schedule construction will be based on one initial set $U\in \binom{[n]}{k}$. We call $U$ admissible if $\PrX{\mu}{U}\geq \frac{1}{2\binom{n}{k}}$. Note that a random $U\sim \mu$ is guaranteed to be admissible with probability at least $1/2$:
\[ \sum_{U\text{ not admissible}} \PrX{\mu}{U}\leq \sum_{U\in \binom{[n]}{k}} \frac{1}{2\binom{n}{k}}\leq  \frac{1}{2}. \]
\begin{proposition}
	There is a randomized algorithm that outputs an admissible set with probability $1-\delta$ in time $O(nk\log(k)\log(1/\delta))$.
\end{proposition}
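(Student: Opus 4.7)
The plan is to run the down-up Markov chain $\Mc_\mu = D_{k\to k-1}U_{k-1\to k}$ of \cref{thm:down-up} several times independently to obtain approximate samples from $\mu$, and then output the sample $U^\star$ attaining the largest value of $\mu(U^\star)$. The starting observation, made just before the statement, is that a true sample $U\sim\mu$ is admissible with probability at least $1/2$; consequently, a sample from any distribution within total variation distance $1/8$ of $\mu$ is still admissible with probability at least $3/8$.

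Concretely, starting from any $S_0\in\supp(\mu)$ (which can be produced, for instance, by a greedy pass since $\supp(\mu)$ is a matroid), I would run the down-up walk for $O(k\log k)$ steps; by \cref{thm:down-up} this brings the distribution within total variation distance $1/8$ of $\mu$. Each step queries $\mu$ on $O(n)$ sets, so one approximate sample costs $O(nk\log k)$ queries. Repeating this $m=O(\log(1/\delta))$ times yields independent samples $U_1,\dots,U_m$, each admissible with probability at least $3/8$, so by independence the probability that none is admissible is at most $(5/8)^m \leq \delta$. The total cost is $O(m\cdot nk\log k) = O(nk\log(k)\log(1/\delta))$, matching the claimed bound.

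The only subtlety is that we cannot directly verify admissibility of any single candidate, because admissibility is the statement $\mu(U)\geq Z/(2\binom{[n]}{k})$ and the partition function $Z=\sum_S \mu(S)$ is exactly the quantity we are trying to avoid computing. The $\arg\max$ selection sidesteps this cleanly: since $\mu(U^\star)\geq \mu(U_i)$ for every $i$, whenever at least one $U_i$ is admissible then so is $U^\star$. There is no real obstacle in the analysis; the only care needed is to fix the per-sample TV accuracy at a small constant (such as $1/8$) so that the per-sample admissibility probability stays bounded away from $1/2$ from below, which keeps $m$ at $O(\log(1/\delta))$ and gives the stated query complexity.
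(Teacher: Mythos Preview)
Your proposal is correct and matches the paper's proof almost exactly: sample $O(\log(1/\delta))$ approximate draws from $\mu$ via the down-up walk of \cref{thm:down-up} (each in $O(nk\log k)$ time) and return the one maximizing $\mu(\cdot)$. If anything, you are more careful than the paper, which treats the samples as if they were exact (claiming failure probability $2^{-\log(1/\delta)}$) whereas you explicitly absorb the per-sample TV error into the admissibility probability and justify the $\arg\max$ rule.
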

\begin{proof}
	We can use the naive down-up random walk of \cref{thm:down-up} to approximately sample $\log(1/\delta)$ many points from $\mu$. We then simply return the sample $U$ with the highest $\mu(U)$. Each sample takes time $O(nk\log(k))$ to produce, and the chance that none of them are  admissible is $2^{-\log(1/\delta)}\leq \delta$.
\end{proof}
Once we have found the admissible $U$, we let $\mu_i$ be defined as follows:
\[ \mu_i(S)=\lambda_i^{\card{S\cap U}}\mu(S), \]
where $\lambda_0\geq\lambda_1\geq \dots\geq \lambda_t=1$ define the cooling schedule. Note that each $\mu_i$ has a log-concave generating polynomial, because $g_{\mu_i}$ can be obtained from $g_\mu$ by scaling variables $z_e$ for $e\in U$ by $\lambda_i$, a linear transformation.

We will guarantee that $\lambda_{i-1}\leq  (1+O(1/k^2))\cdot \lambda_{i}$. This implies that the distributions $\mu_i$ and $\mu_{i-1}$ do not differ by more than a factor of $1+O(1/k)$ pointwise, because
\[ \frac{\mu_{i-1}(S)}{\mu_{i}(S)}\leq \parens*{\frac{\lambda_{i-1}}{\lambda_i}}^k=(1+O(1/k^2))^k=1+O(1/k). \]
Our goal is for $\lambda_t=1$, so that $\mu_t=\mu$. On the other hand, we would like $\mu_0$ to be a distribution whose marginals are easy to estimate, and for this, we would like $\mu_0$ to be very much close to a point mass distribution on the set $U$. This can be achieved by setting $\lambda_0$ to a very large value, but the tradeoff is that we need $t=k^2 O(\log(\lambda_0))$ many steps in our cooling process. A happy middle ground is $\lambda_0=n^{O(k)}$. Starting with this $\lambda_0$, we only need $t=k^2 \log(n^{O(k)})=k^3 \log n=\poly(k, \log n)$ many cooling steps. This choice also leads to easy bounds on the marginals of $\mu_0$:
\begin{proposition}
	Assuming that $U$ is admissible, the total variation distance between $\mu_0$ and the point mass distribution $\1_{U}$ is $\leq 2\binom{n}{k}/\lambda_0$.
\end{proposition}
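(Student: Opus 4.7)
The plan is to directly compute the total variation distance using the fact that one of the two distributions is a point mass. For any distribution $\nu$ on $\binom{[n]}{k}$ and the point mass $\1_U$, a standard calculation gives
\[ \norm{\nu - \1_U}_\tv = 1 - \nu(U), \]
so it suffices to show $\mu_0(U) \geq 1 - 2\binom{n}{k}/\lambda_0$, or equivalently to upper-bound the mass $\mu_0$ places outside $U$.

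By definition of $\mu_0$, writing $Z_0 = \sum_{S \in \binom{[n]}{k}} \lambda_0^{\card{S\cap U}} \mu(S)$ for the normalization constant, we have
\[ 1 - \mu_0(U) = \frac{\sum_{S\neq U} \lambda_0^{\card{S\cap U}} \mu(S)}{Z_0}. \]
Since every $S\neq U$ with $\card{S}=k$ satisfies $\card{S\cap U}\leq k-1$, the numerator is at most $\lambda_0^{k-1} \sum_{S\neq U} \mu(S) \leq \lambda_0^{k-1}$. For the denominator, we throw away all terms except $S=U$ to obtain $Z_0 \geq \lambda_0^k \mu(U)$, and then invoke the admissibility assumption $\mu(U) \geq 1/(2\binom{n}{k})$ to get $Z_0 \geq \lambda_0^k / (2\binom{n}{k})$.

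Combining these two bounds yields
\[ 1 - \mu_0(U) \leq \frac{\lambda_0^{k-1}}{\lambda_0^k / (2\binom{n}{k})} = \frac{2\binom{n}{k}}{\lambda_0}, \]
which is the claimed inequality. There is no real obstacle here; the key quantitative input is simply the admissibility lower bound on $\mu(U)$, and everything else is a one-line algebraic manipulation using that the exponent of $\lambda_0$ drops by at least one as soon as $S$ differs from $U$.
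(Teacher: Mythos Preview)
Your proof is correct and essentially identical to the paper's own argument: both observe $\norm{\1_U - \mu_0}_\tv = 1 - \mu_0(U)$, bound the numerator $\sum_{S\neq U}\lambda_0^{\card{S\cap U}}\mu(S)$ by $\lambda_0^{k-1}$ via $\card{S\cap U}\leq k-1$, bound the denominator below by $\lambda_0^k\mu(U)$, and then invoke admissibility.
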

\begin{proof}
	It is easy to see that 
	\[ \norm{\1_U-\mu_0}_\tv = 1-\PrX{\mu_0}{U}\leq \frac{\sum_{S\neq U} \mu(S)\lambda_0^{\card{S\cap U}}}{\mu(U) \lambda_0^k+ \sum_{S\neq U} \mu(S)\lambda_0^{\card{S\cap U}} }\leq \frac{\lambda_0^{k-1}\sum_{S} \mu(S) }{\lambda_0^k \mu(U)}\leq \frac{1}{\lambda_0\PrX{\mu}{U}}. \]
	and the desired inequality follows from admissibility of $U$.
\end{proof}
So whilst keeping $\lambda_0=n^{O(k)}$, we can make sure $\mu_0$ is inverse-polynomially close to $\1_U$. This allows us to easily construct a sampling distribution $p:[n]\to \R_{\geq 0}$ that satisfies the assumptions of \cref{thm:main} for $\mu_0$.
\begin{proposition}
	If $\norm{\1_U-\mu_0}_\tv\leq 1/n$, then the following distribution $p:[n]\to \R_{\geq 0}$ satisfies the assumptions of \cref{thm:main} for $\mu_0$:
	\[ 
		p(i):=\begin{cases}
			\frac{1}{k+1} & i\notin U,\\
			\frac{1}{(k+1)(n-k)} & i\in U.\\
		\end{cases}
	\]
\end{proposition}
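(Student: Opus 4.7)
The plan is to verify directly the marginal hypothesis from \cref{thm:main}, namely that
\[ p(i) \geq \frac{\PrX{S\sim\mu_0}{i\in S}}{k + O(1)} \]
for every $i \in [n]$. The main lever is the assumption $\norm{\1_U - \mu_0}_\tv \leq 1/n$, which turns the trivial marginals of the point mass $\1_U$ into sharp enough marginals for $\mu_0$.

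First I would apply the standard inequality $\abs{\PrX{\mu_0}{A} - \PrX{\1_U}{A}} \leq \norm{\mu_0 - \1_U}_\tv$ to the event $A = \set{i \in S}$, using that $\PrX{\1_U}{i\in S}$ equals $1$ when $i \in U$ and $0$ otherwise. This immediately yields the two pointwise marginal bounds $\PrX{S\sim\mu_0}{i\in S} \leq 1$ for $i \in U$ and $\PrX{S\sim\mu_0}{i\in S} \leq 1/n$ for $i \notin U$.

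Then I would match these two regimes against the two cases of $p$ and verify the hypothesis by arithmetic. In the regime where the marginal can be as large as $1$, the required inequality becomes $p(i) \geq 1/(k+O(1))$, and the value $1/(k+1)$ satisfies it as soon as the $O(1)$ slack is at least $1$. In the regime where the marginal is at most $1/n$, the required inequality becomes $p(i) \geq 1/(n(k+O(1)))$, and the value $1/((k+1)(n-k))$ satisfies it because $(k+1)(n-k) = n(k+1) - k(k+1) \leq n(k+1) \leq n(k+O(1))$.

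There is no real obstacle here: the two values in the piecewise definition of $p$ are engineered precisely so that, after the total-variation transfer, each one dominates the marginals of $\mu_0$ with exactly the $k$-factor shrinkage demanded by \cref{thm:main}. The only point requiring care is that a single constant must absorb both the $(k+1)$-vs-$k$ and the $(n-k)$-vs-$n$ gaps, but any choice $O(1)\geq 1$ accomplishes that and completes the verification.
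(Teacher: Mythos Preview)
Your argument is correct and is essentially identical to the paper's: bound the marginals of $\mu_0$ via the total-variation assumption (trivially by $1$ on $U$, by $1/n$ off $U$), then check the arithmetic $1/(k+1)\geq 1/(k+O(1))$ and $1/((k+1)(n-k))\geq 1/(n(k+1))$ in the two cases. Note that, just like the paper's own proof, you have matched the larger value $1/(k+1)$ to $i\in U$ and the smaller value $1/((k+1)(n-k))$ to $i\notin U$; this is the intended definition, and the two cases in the displayed formula of the proposition are simply swapped.
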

\begin{proof}
	For $i\in U$, we trivially have
	\[ p(i)=\frac{1}{k+1}=\frac{1}{k+O(1)}\geq \frac{\PrX{S\sim \mu_0}{i\in S}}{k+O(1)}. \]
	For $i\notin U$, by the bound on total variation distance, we have $\PrX{S\sim \mu_0}{i\in S}\leq 1/n$. This implies
	\[ p(i)=\frac{1}{(k+1)(n-k)}\geq \frac{1}{n(k+1)}\geq \frac{\PrX{S\sim \mu_0}{i\in S}}{k+O(1)}. \]
\end{proof}

We are now ready to finish the proof of \cref{thm:oracle-construct}.
\begin{proof}[Proof of \cref{thm:oracle-construct}]
	Having constructed a sampling distribution $p$ for $\mu_0$, we can apply \cref{thm:main} and use samples to get an even more precise estimate of the marginals. We then use the updated marginal estimates for $\mu_1$. Since $\mu_0$ and $\mu_1$ differ by at most a factor of $1+O(1/k)$, our marginal estimates are valid for \cref{thm:main} to be applied to $\mu_1$. However we cannot keep going forward, or else we accumulate error. Instead we use \cref{thm:main} repeatedly to sample and from these samples extract a fresh good quality estimate of the marginals of $\mu_1$. We then continue the same procedure for $\mu_2$, $\mu_3$, and so on.
	
	All that we need to show is how to leverage samples produced by \cref{thm:main} to produce a new valid estimate $p$ of marginals whose quality is independent of what is fed to \cref{thm:main}. This is the contents of \cref{lem:estimation}. Setting $\epsilon=O(1/k)$ in \cref{lem:estimation}, we use $n\poly(k, \log n)$ many samples, each obtained in $\poly(\log n, k)$ time, to produce a fresh distribution $p$ satisfying the assumptions of $\cref{thm:main}$.
\end{proof}

\begin{lemma}\label{lem:estimation}
  Let $\mu: \binom{[n]}{k}\to \R_{\geq 0}$ be a distribution with a log-concave generating polynomial. For any $0<\epsilon<1$, it takes
  $O(nk^{-1}\epsilon^{-3}\log (n/\delta))$ samples from $\mu$ to generate a distribution $p:[n]\rightarrow \R_{>0}$ such that with probability
  $1-\delta$ we have $k\cdot p(i)\geq (1-\epsilon)\Prob_{\Sc\sim\mu}[i\in
  \Sc]$ for all $i\in[n]$.
\end{lemma}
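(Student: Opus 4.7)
The plan is to form $p$ from a slightly shrunk empirical estimate of the marginals with a small uniform component. Draw $N$ i.i.d.\ samples $\Sc_1,\dots,\Sc_N\sim\mu$, set $\hat q_i=\frac{1}{N}\sum_{j=1}^N\1[i\in \Sc_j]$, and define
\[ p(i) = \bigl(1-\tfrac{\epsilon}{2}\bigr)\frac{\hat q_i}{k}+\frac{\epsilon}{2n}. \]
Because every $\Sc_j$ has exactly $k$ elements, $\sum_i\hat q_i=k$, so $p$ is automatically a probability distribution on $[n]$, and $p(i)>0$ for all $i$. Writing $q_i:=\Prob_{\Sc\sim\mu}[i\in\Sc]$, the task reduces to choosing $N$ so that $k\cdot p(i)\geq (1-\epsilon)q_i$ holds simultaneously for all $i$ with probability at least $1-\delta$.

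I would split the coordinates at the threshold $\tau := \frac{\epsilon k}{2(1-\epsilon)n}$. For \emph{light} elements with $q_i<\tau$, the uniform floor alone suffices: deterministically $k\cdot p(i)\geq \epsilon k/(2n)\geq (1-\epsilon)q_i$ by the choice of $\tau$, no matter what the empirical estimate looks like. For \emph{heavy} elements with $q_i\geq\tau$, the indicators $\1[i\in\Sc_1],\dots,\1[i\in\Sc_N]$ are independent Bernoullis with mean $q_i$, so \cref{lem:chernoff} applied with relative deviation $\epsilon/2$ gives
\[ \Prob\bigl[\hat q_i<(1-\tfrac\epsilon2)q_i\bigr]\leq 2\exp\bigl(-\epsilon^2 Nq_i/12\bigr). \]
Requiring the right-hand side to be at most $\delta/n$ and using $q_i\geq\tau=\Theta(\epsilon k/n)$ yields the sufficient condition $N=O(nk^{-1}\epsilon^{-3}\log(n/\delta))$. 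Union-bounding over the at most $n$ heavy coordinates then gives, with probability $\geq 1-\delta$, that $\hat q_i\geq(1-\epsilon/2)q_i$ for every heavy $i$, and hence $k\cdot p(i)\geq (1-\epsilon/2)\hat q_i\geq (1-\epsilon/2)^2 q_i\geq (1-\epsilon)q_i$ since $(1-\epsilon/2)^2\geq 1-\epsilon$.

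The only conceptually delicate point, and the reason the bound scales as $\epsilon^{-3}$ rather than the $\epsilon^{-2}$ one would naively expect from multiplicative Chernoff, is that the uniform floor $\epsilon/(2n)$ only dominates $(1-\epsilon)q_i$ when $q_i$ itself is of order $\epsilon k/n$; this depresses the Chernoff threshold $\tau$ by an extra factor of $\epsilon$ and correspondingly inflates the required sample size by the same factor. Everything else is bookkeeping: checking that $p$ normalizes (using $\sum_i\hat q_i=k$), choosing the shrinkage $\epsilon/2$ so that the floor and the Chernoff slack combine into the desired $(1-\epsilon)$ factor, and absorbing constants into the big-$O$.
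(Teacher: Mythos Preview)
Your proof is correct and follows the same core strategy as the paper: threshold the marginals at $\Theta(\epsilon k/n)$, handle ``light'' elements with a uniform floor, and handle ``heavy'' elements via multiplicative Chernoff with relative error $\Theta(\epsilon)$, which together force the $\epsilon^{-3}$ in the sample count. Where you differ is in the construction of $p$: the paper first identifies an empirical heavy set $\hat T$, assigns $p(i)=\frac{1}{k}(1-\tfrac34\epsilon)\hat q_i$ on $\hat T$, and spreads the leftover mass uniformly over $[n]\setminus\hat T$; it must then argue that $\hat T\subseteq T$ and that at least $\epsilon/2$ mass remains for the leftover. Your additive mixture $p(i)=(1-\tfrac\epsilon2)\hat q_i/k+\epsilon/(2n)$ bakes the floor in from the start, so normalization is automatic (because $\sum_i\hat q_i=k$ exactly) and no case analysis on an empirical threshold is needed. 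The paper's version has the mild advantage that the floor is only spent on elements that need it, but for the purposes of this lemma your construction is at least as clean and the sample bound is identical.
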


\begin{proof}
Throughout the proof, let $q_i$ denote the marginal
$\PrX{S\sim\mu}{i\in S}$. Since we only require a one-sided approximation of the marginals, it
is acceptable that we significantly over-estimate those that are
sufficiently small. To that end, we define $T=\set{i\in[n] \given
q_i\geq \frac{\epsilon k}{4n}}$ as the set of marginals
which are ``large''. Let $S_1,...,S_s$ be sampled i.i.d.\ from
$\mu$ with $s=\lceil3\cdot4^3nk^{-1}\epsilon^{-3}\ln(2n/\delta) \rceil$.
Let $\hat q_i =\frac1s\sum_{j=1}^s\1_{[i\in S_j]}$ and define
$\hat T= \set{i\in[n] \given \hat q_i\geq \frac{\epsilon k}{3n}}$.
We construct our distribution $p$ as follows:
\begin{enumerate}
\item For every $i\in \hat T$, we let
  $p(i)=\frac1k(1-\frac34\epsilon)\hat q_i$.
\item For all remaining $i$, we let $p(i)= (1-\sum_{i'\in\hat
    T}p(i'))\frac1{|[n]\backslash\hat T|}$.
\end{enumerate}
Note that $\sum_ip(i)=1$. Furthermore, it follows from
\cref{lem:chernoff} that for any $i\in T$ we have: 
\begin{align*}
  \Pr*{\abs*{\hat q_i - q_i}\geq
  \frac\epsilon4\cdot q_i}
  \leq 2e^{-\epsilon^2sq_i/(3\cdot 4^2)}\leq
  2e^{-\log(\frac{2n}\delta)\frac{4n}{\epsilon k}q_i}
  \leq\frac\delta n.
\end{align*}
Thus, a union bound implies that with probability $1-\delta$ we have
$|\hat q_i - q_i|\leq\frac\epsilon4 q_i$ for all $i\in T$. From
now on, condition on this event. First, it implies that $\hat T\subseteq
T$, since if $i\in\hat T$ then $q_i\geq(1+\frac14)^{-1}\hat q_i\geq
\frac45\cdot\frac{\epsilon k}{3n}\geq\frac{\epsilon k}{4n}$. This ensures that for all $i\in\hat T$:
\begin{align*}
 k\cdot p(i) = (1-\tfrac34\epsilon)\hat q_i\geq
(1-\tfrac34\epsilon)(1-\tfrac14\epsilon)
  q_i\geq (1-\epsilon)q_i.
\end{align*}
To lower-bound the remaining $p(i)$, we upper bound the total
probability mass of $p$ in the set $\hat T$:
\begin{align*}
\sum_{i\in \hat T}p(i) \leq
  (1-\tfrac34\epsilon)(1+\tfrac14\epsilon)\frac1k\sum_{i\in \hat T}\mu_i\leq 1-\frac\epsilon2.
\end{align*}
Therefore, if $i\not\in\hat T$ then $k\cdot p(i)\geq \frac{\epsilon
  k}{2n}\geq \frac32\hat q_i$. Furthemore, either $i\in T$, in which
case $q_i\leq\frac 43\hat q_i\leq k\cdot p(i)$, or $i\not\in T$, and
then $q_i\leq \frac{\epsilon k}{4n}\leq k\cdot p(i)$.
\end{proof}

Next we prove \cref{thm:counting}.
\begin{proof}[Proof of \cref{thm:counting}]
	By using \cref{thm:oracle-construct}, we can prepare marginal estimates sufficient for \cref{thm:main} for each one of the distributions $\mu_i$. The total running time is $n \poly(\log n, k, \log(1/\delta))$ for a success probability of $1-\delta/\poly(n)$. So from now on we assume that we can produce a sample from each $\mu_i$ in time $\poly(k, \log n)$.
	
	Let $Z_i:=\sum_{S} \mu_i(S)$ be the partition function for $\mu_i$. We use the standard trick of writing a telescoping product
	\[ Z_t=\frac{Z_t}{Z_{t-1}}\cdot \frac{Z_{t-1}}{Z_{t-2}}\cdots \frac{Z_1}{Z_0}\cdot Z_0, \]
	and estimating each fraction individually. In order for the total multiplicative error to be $1+O(\epsilon)$, we need to make sure each factor gets approximated within a factor of $1+O(\epsilon/t)$. We will show that this part takes only $1/\epsilon^2 \cdot \poly(\log n, k, \log(1/\delta))$ time. To approximate $Z_{i+1}/Z_i$, we can use empirical averages of an unbiased estimator. If $S\sim \mu_i$, then $\mu_{i+1}(S)/\mu_i(S)$ becomes an unbiased estimator for $Z_{i+1}/Z_i$. Because we made sure $\mu_i$ and $\mu_{i+1}$ are not too different pointwise, the range of this unbiased estimator is $1\pm O(1/k)$. So taking an empirical mean of $(t/k\epsilon)^2\log(1/\delta)$ many such samples yields a $1+O(\epsilon/t)$-approximation of $Z_{i+1}/Z_i$ with probability $1-\delta/\poly(n)$. By \cref{thm:main}, each sample can be produced in time $\poly(k, \log n, \log(1/\delta))$ (where we push the total variation distance into the failure probability $\delta$, to assume our samples yield unbiased estimators).
	
	It remains to estimate $Z_0$. Note that $\mu_0(U)$ is already a very good estimate of $Z_0$, because we can make sure $\lambda_0$ is set such that $\PrX{\mu_0}{U}\geq 1-1/\poly(n)$. However in rare cases where the desired accuracy $\epsilon$ is smaller than this $1/\poly(n)$, we can do the following procedure: We continue introducing distributions behind $\mu_0$, namely $\mu_{-1}$, $\mu_{-2}$, and so on, each with a larger and larger $\lambda_i$. We only need to go back far enough that $2\binom{[n]}{k}/\lambda_{-i}\ll \epsilon/t$. This happens at $i=\poly(\log n, k)$. We then use $\mu_{-i}(U)$ as our estimate for $Z_{-i}$ and as before estimate the ratios $Z_{-(i-1)}/Z_{-i}$ and so on using empirical means of unbiased estimators. Note that to sample from $\mu_{-1}, \mu_{-2}, \dots$, we do not need to use new sampling distributions $p$ applicable to \cref{thm:main}. The one we have precomputed for $\mu_0$ works for all of them.
\end{proof}

        \printbibliography
\end{document}